\newtheorem{thm} {Theorem}
\newtheorem{lem} {Lemma}
\newtheorem{assumption} {Assumption}
\begin{document}
\title{Diffusion Strategies Outperform Consensus Strategies for Distributed Estimation over Adaptive Networks}

\author{Sheng-Yuan Tu, \IEEEmembership{Student Member,~IEEE}
        and Ali H. Sayed, \IEEEmembership{Fellow,~IEEE}
\thanks{Copyright (c) 2012 IEEE. Personal use of this material is
permitted. However, permission to use this material for any other
purposes must be obtained from the IEEE by sending a request to
pubs-permissions@ieee.org.}
\thanks{This work was supported in part
by NSF grants CCF-1011918 and CCF-0942936. An earlier version of
this work appeared in \cite{Tu11c}. The authors are with the
Department of Electrical Engineering, University of California, Los
Angeles (e-mail: shinetu@ee.ucla.edu; sayed@ee.ucla.edu).}}

\maketitle
\begin{abstract}
Adaptive networks consist of a collection of nodes with adaptation
and learning abilities. The nodes interact with each other on a
local level and diffuse information across the network to solve
estimation and inference tasks in a distributed manner. In this
work, we compare the mean-square performance of two main strategies
for distributed estimation over networks: consensus strategies and
diffusion strategies. The analysis in the paper confirms that under
constant step-sizes, diffusion strategies allow information to
diffuse more thoroughly through the network and this property has a
favorable effect on the evolution of the network: diffusion networks
are shown to converge faster and reach lower mean-square deviation
than consensus networks, and their mean-square stability is
insensitive to the choice of the combination weights. In contrast,
and surprisingly, it is shown that consensus networks can become
unstable even if all the individual nodes are stable and able to
solve the estimation task on their own. When this occurs,
cooperation over the network leads to a catastrophic failure of the
estimation task. This phenomenon does not occur for diffusion
networks: we show that stability of the individual nodes always
ensures stability of the diffusion network irrespective of the
combination topology. Simulation results support the theoretical
findings.
\end{abstract}

\begin{keywords}
Adaptive networks, diffusion strategy, consensus strategy, mean
stability, mean-square stability, mean-square-error performance,
combination weights.
\end{keywords}

\section{Introduction}
Adaptive networks consist of a collection of spatially distributed
nodes that are linked together through a topology and that cooperate
with each other through local interactions. Adaptive networks are
well-suited to perform decentralized information processing and
inference tasks \cite{Lopes08,Cattivelli10} and to model complex and
self-organized behavior encountered in biological systems
\cite{Tu11,Cattivelli11}.

We examine two types of fully decentralized strategies, namely,
consensus strategies and diffusion strategies. The consensus
strategy was originally proposed in the statistics literature
\cite{Degroot74} and has since then been developed into an elegant
procedure to enforce agreement among cooperating nodes. Average
consensus and gossip algorithms have been studied extensively in
recent years, especially in the control literature
\cite{Xiao04,Boyd06,Olfati04,Aysal09,Sardellitti10,Jakovetic10}, and
applied to the study of multi-agent formations
\cite{Jadbabaie03,Ren05}, distributed optimization
\cite{Tsitsiklis86,Nedic09}, and distributed estimation problems
\cite{Dimakis10,Kar11,Schizas09}. Original implementations of the
consensus strategy relied on the use of two time-scales
\cite{Xiao06,Barbarossa07,Johansson08}: one time-scale for the
collection of measurements across the nodes and another time-scale
to iterate sufficiently enough over the collected data to attain
agreement before the process is repeated. Unfortunately, two
time-scale implementations hinder the ability to perform real-time
recursive estimation and adaptation when measurement data keep
streaming in. For this reason, in this work, we focus instead on
consensus implementations that operate in a single time-scale. Such
implementations appear in several recent works, including
\cite{Nedic09,Schizas09,Dimakis10,Kar11}, and are largely motivated
by the procedure developed earlier in
\cite{Tsitsiklis86,Bertsekas97} for the solution of distributed
optimization problems.

The second class of algorithms that we consider deals with diffusion
strategies, which were originally introduced for the solution of
distributed estimation and adaptation problems in
\cite{Lopes06,Sayed07,Cattivelli08b,Lopes08,Cattivelli10}. The main
motivation for the introduction of diffusion strategies in these
works was the desire to develop distributed schemes that are able to
respond in real-time to \emph{continuous streaming} of data at the
nodes by operating over a \emph{single} time-scale. A useful
overview of diffusion strategies appears in \cite{Sayed13}. Since
their inception, diffusion strategies have been applied to model
various forms of complex behavior encountered in nature
\cite{Tu11,Cattivelli11}; they have also been adopted to solve
distributed optimization problems advantageously in
\cite{Ram10,Srivastava11,Chen11}; and have been studied under varied
conditions in \cite{Li09,Takahashi10b,Chouvardas11,Abdolee11} as
well. Diffusion strategies are inherently single time-scale
implementations and are therefore naturally amenable to real-time
and recursive implementations. It turns out that the dynamics of the
consensus and diffusion strategies differ in important ways, which
in turn impact the mean-square behavior of the respective networks
in a fundamental manner.

The analysis in this paper will confirm that under constant
step-sizes, diffusion strategies allow information to diffuse more
thoroughly through networks and this property has a favorable effect
on the evolution of the network. It will be shown that diffusion
networks converge faster and reach lower mean-square deviation than
consensus networks, and their mean-square stability is insensitive
to the choice of the combination weights. In comparison, and
surprisingly, it is shown that consensus networks can become
unstable even if all the individual nodes are stable and able to
solve estimation task on their own. In other words, the learning
curve of a cooperative consensus network can diverge even if the
learning curves for the non-cooperative individual nodes converge.
When this occurs, cooperation over the network leads to a
catastrophic failure of the estimation task. This behavior does not
occur for diffusion networks: we will show that stability of the
individual nodes is sufficient to ensure stability of the diffusion
network \emph{regardless} of the combination weights. The properties
revealed in this paper indicate that there needs to be some care
with the use of consensus strategies for adaptation because they can
lead to network failure even if the individual nodes are stable and
well-behaved. The analysis also suggests that diffusion strategies
provide a proper way to enforce cooperation over networks; their
operation is such that diffusion networks will always remain stable
irrespective of the combination topology.

\section{Estimation Strategies over Networks}
Consider a network consisting of $N$ nodes distributed over a
spatial domain. Two nodes are said to be neighbors if they can
exchange information. The neighborhood of node $k$ is denoted by
$\mathcal{N}_k$. The nodes in the network would like to estimate an
unknown $M\times 1$ vector, $w^\circ$. At every time instant, $i$,
each node $k$ is able to observe realizations $\{d_k(i),u_{k,i}\}$
of a scalar random process $\boldsymbol{d}_{k}(i)$ and a $1\times M$
vector random process $\boldsymbol{u}_{k,i}$ with a
positive-definite covariance matrix,
$R_{u,k}=\mathbb{E}\boldsymbol{u}_{k,i}^*\boldsymbol{u}_{k,i}>0$,
where $\mathbb{E}$ denotes the expectation operator. All vectors in
our treatment are column vectors with the exception of the
regression vector, $\boldsymbol{u}_{k,i}$, which is taken to be a
row vector for convenience of presentation. The random processes
$\{\boldsymbol{d}_{k}(i),\boldsymbol{u}_{k,i}\}$ are related to
$w^\circ$ via the linear regression model \cite{Sayed08}:
\begin{equation} \label{eq2}
    \boldsymbol{d}_k(i) = \boldsymbol{u}_{k,i}w^\circ+\boldsymbol{v}_k(i)
\end{equation}
where $\boldsymbol{v}_k(i)$ is measurement noise with variance
$\sigma^2_{v,k}$ and assumed to be temporally white and spatially
independent, i.e.,
\begin{equation}
    \mathbb{E}\boldsymbol{v}^*_k(i)\boldsymbol{v}_l(j)=
    \sigma^2_{v,k}\cdot\delta_{kl}\cdot\delta_{ij}
\end{equation}
in terms of the Kronecker delta function. The regression data
$\boldsymbol{u}_{k,i}$ are likewise assumed to be temporally white
and spatially independent. The noise $\boldsymbol{v}_k(i)$ and the
regressors $\{\boldsymbol{u}_{l,j}\}$ are assumed to be independent
of each other for all $\{k,l,i,j\}$. All random processes are
assumed to be zero mean. Note that we use boldface letters to denote
random quantities and normal letters to denote their realizations or
deterministic quantities. Models of the form (\ref{eq2}) are useful
in capturing many situations of interest, such as estimating the
parameters of some underlying physical phenomenon, tracking a moving
target by a collection of nodes, or estimating the location of a
nutrient source or predator in biological networks (see, e.g.,
\cite{Tu11,Cattivelli11,Sayed08}); these models are also useful in
the study of the performance limits of combinations of adaptive
filters \cite{Arenas05,Candido10,Kozat10,Xia11b}.

The objective of the network is to estimate $w^\circ$ in a
distributed manner through an online learning process. The nodes
estimate $w^\circ$ by seeking to minimize the following global cost
function:
\begin{equation} \label{eq1}
    J^{\text{glob}}(w)\triangleq\sum_{k=1}^{N}
    \mathbb{E}|\boldsymbol{d}_k(i)-\boldsymbol{u}_{k,i}w|^2.
\end{equation}
In the sequel, we describe the algorithms pertaining to the
consensus and diffusion strategies that we study in this article, in
addition to the non-cooperative mode of operation. Afterwards, we
move on to the main theme of this work, which is to show why
diffusion networks outperform consensus networks. We may remark that
the same strategies can be used to optimize global cost functions
where the individual costs are not necessarily quadratic in $w$ as
in (\ref{eq1}). Most of the mean-square analysis performed here can
be extended to this more general scenario --- see, e.g.,
\cite{Chen11,Towfic12} and the references therein.

\subsection{Non-Cooperative Strategy}
In the non-cooperative mode of operation, each node $k$ operates
independently of the other nodes and estimates $w^\circ$ by means of
a local LMS adaptive filter applied to its data
$\{d_k(i),u_{k,i}\}$. The filter update takes the following form
\cite{Haykin02,Sayed08}:
\begin{equation}\label{eq56}\text{(non-cooperative strategy)}\quad \boxed{
    w_{k,i}=w_{k,i-1}+\mu_ku_{k,i}^*[d_k(i)-u_{k,i}w_{k,i-1}]}
\end{equation}
where $\mu_k>0$ is the \emph{constant} step-size used by node $k$.
In (\ref{eq56}), the vector $w_{k,i}$ denotes the estimate for
$w^\circ$ that is computed by node $k$ at time $i$. Note that for
the underlying model where $R_{u,k}>0$ for all $k$, every individual
node can employ (\ref{eq56}) to estimate $w^\circ$ independently if
desired. Studies allowing for other observability conditions for
diffusion and consensus strategies, including possibly singular
covariance matrices, appear in \cite{Kar11,Abdolee12}.

\subsection{Cooperative Strategies}
In the cooperative mode of operation, nodes interact with their
neighbors by sharing information. In this article, we study three
cooperative strategies for distributed estimation.

\subsubsection*{\underline{B.1. Consensus Strategy}}
The consensus strategy often appears in the literature in the
following form (see, e.g., Eq. (1.20) in \cite{Nedic09}, Eq. (19) in
\cite{Dimakis10}, and Eq. (9) in \cite{Kar11}):
\begin{equation}\label{eq10}
\begin{aligned}
    w_{k,i} = w_{k,i-1}-
    \mu_k(i)\cdot \sum_{l\in\mathcal{N}_k\setminus\{k\}}
    b_{l,k}(w_{k,i-1}-w_{l,i-1})
    +\mu_k(i)\cdot u_{k,i}^*[d_k(i)-u_{k,i}w_{k,i-1}]
\end{aligned}
\end{equation}
where $\{b_{l,k}\}$ is a set of nonnegative coefficients. It should
be noted that in most works on consensus implementations, especially
in the context of distributed optimization problems
\cite{Nedic09,Bertsekas97,Dimakis10,Kar11,Ram10}, the step-sizes
$\{\mu_k(i)\}$ that are used in (\ref{eq10}) depend on the
time-index $i$ and are required to satisfy
\begin{equation}\label{eq21}
    \sum_{i=0}^{\infty}\mu_k(i)=\infty \text{ and }
    \sum_{i=0}^{\infty} \mu_k^2(i)<\infty.
\end{equation}
In other words, for each node $k$, the step-size sequence $\mu_k(i)$
is required to vanish as $i\rightarrow\infty$. Under such
conditions, it is known that consensus strategies allow the nodes to
reach agreement about $w^\circ$
\cite{Nedic09,Kar11,Braca10,Bajovic11}. Here, instead, we will use
\emph{constant} step-sizes $\{\mu_k\}$. This is because we are
interested in studying the adaptation and learning abilities of the
networks. Constant step-sizes are critical to endow networks with
\emph{continuous} adaptation and tracking abilities; otherwise,
under (\ref{eq21}), once the step-sizes have decayed to zero, the
network stops adapting and learning is turned off.

We can rewrite recursion (\ref{eq10}) in a more compact and
revealing form by combining the first two terms on the right-hand
side of (\ref{eq10}) and by introducing the following coefficients:
\begin{equation}\label{eq20}
    a_{l,k}\triangleq\begin{cases}
    1-\sum_{j\in\mathcal{N}_k\setminus\{k\}}\mu_kb_{j,k}, &\text{if
    $l=k$}\\
    \mu_kb_{l,k}, &\text{if $l\in\mathcal{N}_k\setminus\{k\}$}\\
    0, &\text{otherwise}
    \end{cases}
\end{equation}
In this way, recursion (\ref{eq10}) can be rewritten equivalently as
(see, e.g., expression (7.1) in \cite{Bertsekas97} and expression
(1.20) in \cite{Nedic09}):
\begin{equation}\label{eq7}\text{(consensus strategy)}\quad\boxed{
    w_{k,i}=\sum_{l\in\mathcal{N}_k}a_{l,k}w_{l,i-1}
    +\mu_ku_{k,i}^*[d_k(i)-u_{k,i}w_{k,i-1}]}
\end{equation}
The entry $a_{l,k}$ denotes the weight that node $k$ assigns to the
estimate $w_{l,i-1}$ received from its neighbor $l$ (see Fig.
\ref{Fig_2}); note that the weights $\{a_{l,k}\}$ are nonnegative
for $l\neq k$ and that $a_{k,k}$ is nonnegative for sufficiently
small step-sizes. If we collect the nonnegative weights
$\{a_{l,k}\}$ into an $N\times N$ matrix $A$, then it follows from
(\ref{eq20}) that the combination matrix $A$ satisfies the following
properties:
\begin{equation} \label{eq50}\boxed{
    a_{l,k}\geq 0\text{, }A^T\mathds{1}=\mathds{1} \text{, and }
    a_{l,k}=0 \text{ if $l\notin\mathcal{N}_k$}}
\end{equation}
where $\mathds{1}$ is a vector of size $N$ with all entries equal to
one. That is, the weights on the links arriving at node $k$ add up
to one, which is equivalent to saying that the matrix $A$ is
left-stochastic. Moreover, if two nodes $l$ and $k$ are not linked,
then their corresponding entry $a_{l,k}$ is zero.

\begin{figure}
\centering
\includegraphics[width=18em]{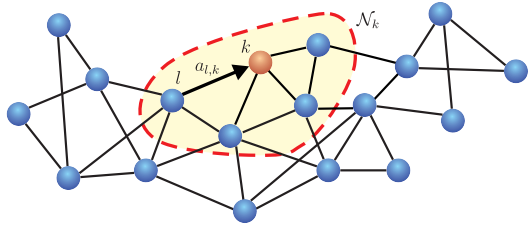}
\caption{A connected network showing the neighborhood of node $k$,
denoted by $\mathcal{N}_k$. The weight $a_{l,k}$ scales the data
transmitted from node $l$ to node $k$ over the edge linking them.}
\label{Fig_2}
\end{figure}

\subsubsection*{\underline{B.2. ATC Diffusion Strategy}}
Diffusion strategies for the optimization of (\ref{eq1}) in a fully
decentralized manner were derived in
\cite{Lopes06,Sayed07,Lopes08,Cattivelli08b,Cattivelli10,Chen11} by
applying a completion-of-squares argument, followed by a stochastic
approximation step and an incremental approximation step --- see
\cite{Sayed13}. The adapt-then-combine (ATC) form of the diffusion
strategy is described by the following update equations
\cite{Cattivelli10}:
\begin{equation} {\label{eq49}}\text{(ATC diffusion strategy)}\quad\boxed{
\begin{aligned}
    \psi_{k,i} &= w_{k,i-1}+\mu_ku_{k,i}^*[d_k(i)-u_{k,i}w_{k,i-1}]\\
    w_{k,i} &= \sum_{l\in\mathcal{N}_k}a_{l,k}\psi_{l,i}
\end{aligned}}.
\end{equation}
The above strategy consists of two steps. The first step of
(\ref{eq49}) involves local adaptation, where node $k$ uses its own
data $\{d_k(i),u_{k,i}\}$ to update its weight estimate from
$w_{k,i-1}$ to an intermediate value $\psi_{k,i}$. The second step
of (\ref{eq49}) is a consultation (combination) step where the
intermediate estimates $\{\psi_{l,i}\}$ from the neighborhood of
node $k$ are combined through weights $\{a_{l,k}\}$ that satisfy
(\ref{eq50}) to obtain the updated weight estimate $w_{k,i}$.

\subsubsection*{\underline{B.3. CTA Diffusion Strategy}}
Another variant of the diffusion strategy is the combine-then-adapt
(CTA) form, which is described by the following update equations
\cite{Lopes08}:
\begin{equation}\label{eq52}\text{(CTA diffusion strategy)}\quad\boxed{
\begin{aligned}
    \psi_{k,i-1} &= \sum_{l\in\mathcal{N}_k}a_{l,k}w_{l,i-1}\\
    w_{k,i} &=\psi_{k,i-1}+\mu_ku_{k,i}^*[d_k(i)-u_{k,i}\psi_{k,i-1}]
\end{aligned}}.
\end{equation}
Thus, comparing the ATC and CTA strategies, we note that the order
of the consultation and adaptation steps are simply reversed. The
first step of (\ref{eq52}) involves a consultation step, where the
existing estimates $\{w_{l,i-1}\}$ from the neighbors of node $k$
are combined through the weights $\{a_{l,k}\}$. The second step of
(\ref{eq52}) is a local adaptation step, where node $k$ uses its own
data $\{d_k(i),u_{k,i}\}$ to update its weight estimate from the
intermediate value $\psi_{k,i-1}$ to $w_{k,i}$.

\subsubsection*{\underline{B.4. Comparing Diffusion and Consensus Strategies}}
For ease of comparison, we rewrite below the recursions that
correspond to the consensus (\ref{eq7}), ATC diffusion (\ref{eq49}),
and CTA diffusion (\ref{eq52}) strategies in a single update:
\begin{align}
    \text{(consensus)}\quad
    w_{k,i}&=\sum_{l\in\mathcal{N}_k}a_{l,k}w_{l,i-1}
    +\mu_ku_{k,i}^*[d_k(i)-u_{k,i}w_{k,i-1}]\label{eq115}\\
    \text{(ATC diffusion)}\quad w_{k,i}&=\sum_{l\in\mathcal{N}_k}a_{l,k} \left(w_{l,i-1}
    + \mu_l u_{l,i}^*[d_l(i)-u_{l,i}w_{l,i-1}]\right)\label{eq53} \\
    \text{(CTA diffusion)}\quad w_{k,i}&=\sum_{l\in\mathcal{N}_k}a_{l,k} w_{l,i-1}
    + \mu_ku_{k,i}^*\left[d_k(i)-u_{k,i}\left(
    \sum_{l\in\mathcal{N}_k}a_{l,k}
    w_{l,i-1}\right)\right].\label{eq54}
\end{align}
Note that the first terms on the right hand side of these recursions
are all the same. For the second terms, only variable $w_{k,i-1}$
appears in the consensus strategy (\ref{eq115}), while the diffusion
strategies (\ref{eq53})-(\ref{eq54}) incorporate the estimates
$\{w_{l,i-1}\}$ from the neighborhood of node $k$ into the update of
$w_{k,i}$. Moreover, in contrast to the consensus (\ref{eq115}) and
CTA diffusion (\ref{eq54}) strategies, the ATC diffusion strategy
(\ref{eq53}) further incorporates the influence of the data
$\{d_{l}(i),u_{l,i}\}$ from the neighborhood of node $k$ into the
update of $w_{k,i}$. These differences in the order by which the
computations are performed have important implications on the
evolution of the weight-error vectors across consensus and diffusion
networks. It is important to note that the diffusion strategies
(\ref{eq53})-(\ref{eq54}) are able to incorporate additional
information into their processing steps \emph{without} being more
complex than the consensus strategy. All three strategies have the
\emph{same} computational complexity and require sharing the same
amount of data (see Table I), as can be ascertained by comparing the
actual implementations (\ref{eq7}), (\ref{eq49}), and (\ref{eq52}).
The key fact to note is that the diffusion implementations first
generate an \emph{intermediate state} variable, which is
subsequently used in the final update. This important ordering of
the calculations has a critical influence on the performance of the
algorithms, as we now move on to reveal.

\begin{table}
\centering \caption{{\rm Comparison of the number of complex
multiplications and additions per iteration, as well as the number
of $M\times 1$ vectors that are exchanged for each iteration of the
algorithms at every node $k$. In the table, the symbol $n_k$ denotes
the degree of node $k$, i.e., the size of its neighborhood
$\mathcal{N}_k$. Observe that all three strategies have
\emph{exactly} the same computational complexity.}}
\begin{tabular}{|c|c|c|c|}
\hline & \textbf{ATC diffusion} (\ref{eq49}) & \textbf{CTA
diffusion} (\ref{eq52}) & \textbf{Consensus} (\ref{eq7}) \\
\hline\hline \textbf{Multiplications} &
    $(n_k+2)M$ & $(n_k+2)M$ & $(n_k+2)M$\\
\hline \textbf{Additions} &
    $(n_k+1)M$ & $(n_k+1)M$ & $(n_k+1)M$\\
\hline \textbf{Vector exchanges} &
    $n_k$ & $n_k$ & $n_k$\\
\hline
\end{tabular}
\end{table}

\section{Mean-Square Performance Analysis}
The mean-square performance of diffusion networks has been studied
in detail in \cite{Lopes08,Cattivelli10,Sayed13} by applying energy
conservation arguments \cite{Sayed08,Naffouri03}. Following
\cite{Cattivelli10}, we will first show how to carry out the
performance analysis in a unified manner that covers both diffusion
and consensus strategies (see Table II further ahead, which
highlights how the parameters for both strategies differ).
Subsequently, we use the resulting performance expressions to carry
out detailed comparisons and to establish and highlight some
surprising and interesting differences in performance.

\subsection{Network Error Recursion}
Let the error vector for an arbitrary node $k$ be denoted by
\begin{equation}
    \tilde{\boldsymbol{w}}_{k,i}\triangleq
    w^\circ-\boldsymbol{w}_{k,i}.
\end{equation}
We collect all error vectors and step-sizes across the network into
a block vector and block matrix:
\begin{align}
    \tilde{\boldsymbol{w}}_{i}&\triangleq
    \text{col}\left\{\tilde{\boldsymbol{w}}_{1,i},\tilde{\boldsymbol{w}}_{2,i},\cdots,
    \tilde{\boldsymbol{w}}_{N,i}\right\}\\
    \mathcal{M}&\triangleq\text{diag}\{\mu_1I_M,\mu_2I_M,\cdots,\mu_NI_M\} \label{eq84}
\end{align}
where the notation $\text{col}\{\cdot\}$ denotes the vector that is
obtained by stacking its arguments on top of each other, and the
notation $\text{diag}\{\cdot\}$ constructs a diagonal matrix from
its arguments. We further introduce the extended combination matrix:
\begin{equation}
    \mathcal{A}\triangleq A\otimes I_M
\end{equation}
where the symbol $\otimes$ denotes the Kronecker product of two
matrices. This construction replaces each entry $a_{l,k}$ in $A$ by
the $M\times M$ diagonal matrix $a_{l,k}I_M$ in $\mathcal{A}$. Then,
if we start from (\ref{eq115}), (\ref{eq53}), or (\ref{eq54}), and
use model (\ref{eq2}), some straightforward algebra similar to
\cite{Cattivelli10,Sayed13} shows that the global error vector
$\tilde{\boldsymbol{w}}_{i}$ for the various strategies evolves
according to the following recursion:
\begin{equation}\label{eq12}\boxed{
    \tilde{\boldsymbol{w}}_{i}=\boldsymbol{\mathcal{B}}_i\cdot
    \tilde{\boldsymbol{w}}_{i-1}-\boldsymbol{y}_i}
\end{equation}
where the quantities $\boldsymbol{\mathcal{B}}_i$ and
$\boldsymbol{y}_i$ are listed in Table II and where
$\boldsymbol{\mathcal{R}}_i$ is a block diagonal matrix and
$\boldsymbol{s}_i$ is a block column vector:
\begin{align}
    \boldsymbol{\mathcal{R}}_i&\triangleq\text{diag}\{\boldsymbol{u}^*_{1,i}\boldsymbol{u}_{1,i},
    \boldsymbol{u}^*_{2,i}\boldsymbol{u}_{2,i},\cdots,
    \boldsymbol{u}^*_{N,i}\boldsymbol{u}_{N,i}\}\\
    \boldsymbol{s}_i&\triangleq\text{col}\{\boldsymbol{u}^*_{1,i}\boldsymbol{v}_{1,i},
    \boldsymbol{u}^*_{2,i}\boldsymbol{v}_{2,i},\cdots,
    \boldsymbol{u}^*_{N,i}\boldsymbol{v}_{N,i}\}.
\end{align}
The coefficient matrix $\boldsymbol{\mathcal{B}}_i$ is an $N\times
N$ block matrix with blocks of size $M\times M$ each. Likewise, the
driving vector $\boldsymbol{y}_i$ is an $N\times 1$ block vector
with entries that are $M\times 1$ each. The matrix
$\boldsymbol{\mathcal{B}}_i$ controls the evolution of the network
error vector $\tilde{\boldsymbol{w}}_{i}$. It is obvious from Table
II that this matrix is different for each of the strategies under
consideration. We shall verify in the sequel that the differences
have critical ramifications when we compare consensus and diffusion
strategies. Note in passing that any of these three distributed
strategies degenerates to the non-cooperative strategy (\ref{eq56})
when $A=I_N$.

\begin{table}
\centering \caption{{\rm The network weight error vector evolves
according to the recursion}
$\tilde{\boldsymbol{w}}_{i}=\boldsymbol{\mathcal{B}}_i\cdot
    \tilde{\boldsymbol{w}}_{i-1}-\boldsymbol{y}_i$,
{\rm where the variables} $\{\boldsymbol{\mathcal{B}}_i,
\boldsymbol{y}_i\}$, {\rm and their respective means or covariances,
are listed below for three cooperative strategies and the
non-cooperative strategy.} }
\renewcommand{\arraystretch}{1.2}
\begin{tabular}{|c|c|c|c|c|}
\hline & \textbf{ATC diffusion} (\ref{eq49}) & \textbf{CTA
diffusion} (\ref{eq52}) & \textbf{Consensus} (\ref{eq7})
& \textbf{Non-cooperative} (\ref{eq56}) \\
\hline\hline $\boldsymbol{\mathcal{B}}_i$ &
    $\mathcal{A}^{T}(I_{NM}-\mathcal{M}\boldsymbol{\mathcal{R}}_i)$ &
    $(I_{NM}-\mathcal{M}\boldsymbol{\mathcal{R}}_i)\mathcal{A}^{T}$ &
    $\mathcal{A}^T-\mathcal{M}\boldsymbol{\mathcal{R}}_i$ &
    $I_{NM}-\mathcal{M}\boldsymbol{\mathcal{R}}_i$\\
\hline $\mathcal{B}\triangleq\mathbb{E}\boldsymbol{\mathcal{B}}_{i}$
&   $\mathcal{A}^T(I_{NM}-\mathcal{M}\mathcal{R})$ &
    $(I_{NM}-\mathcal{M}\mathcal{R})\mathcal{A}^T$ &
    $\mathcal{A}^T-\mathcal{M}\mathcal{R}$ &
    $I_{NM}-\mathcal{M}\mathcal{R}$\\
\hline $\boldsymbol{y}_i$ &
    $\mathcal{A}^T\mathcal{M}\boldsymbol{s}_i$ &
    $\mathcal{M}\boldsymbol{s}_i$ &
    $\mathcal{M}\boldsymbol{s}_i$ &
    $\mathcal{M}\boldsymbol{s}_i$\\
\hline $\mathcal{Y}\triangleq
\mathbb{E}\boldsymbol{y}_i\boldsymbol{y}_i^*$ &
    $\mathcal{A}^T\mathcal{M}\mathcal{S}\mathcal{M}\mathcal{A}$ &
    $\mathcal{M}\mathcal{S}\mathcal{M}$ &
    $\mathcal{M}\mathcal{S}\mathcal{M}$ &
    $\mathcal{M}\mathcal{S}\mathcal{M}$ \\
\hline
\end{tabular}
\end{table}

\subsection{Mean Stability}
We start our analysis by examining the stability in the mean of the
networks, i.e., the stability of the recursion for
$\mathbb{E}\tilde{\boldsymbol{w}}_{i}$. Thus, note that the matrices
$\{\boldsymbol{\mathcal{B}}_{i}\}$ in Table II are random matrices
due to the randomness of the regressors $\{\boldsymbol{u}_{k,i}\}$
in $\boldsymbol{\mathcal{R}}_i$. In other words, the evolution of
the networks is stochastic in nature. Now, since the regressors
$\{\boldsymbol{u}_{k,i}\}$ are temporally white and spatially
independent, then the $\{\boldsymbol{\mathcal{B}}_{i}\}$ are
independent of $\tilde{\boldsymbol{w}}_{i-1}$ for any of the
strategies. Moreover, since the
$\{\boldsymbol{u}_{k,i},\boldsymbol{v}_k(i)\}$ are independent of
each other, then the $\{\boldsymbol{y}_{i}\}$ are zero mean. Taking
expectation of both sides of (\ref{eq12}), we find that the mean of
$\tilde{\boldsymbol{w}}_{i}$ evolves in time according to the
recursion:
\begin{equation}\label{eq5}\boxed{
    \mathbb{E}\tilde{\boldsymbol{w}}_{i}=\mathcal{B}\cdot
    \mathbb{E}\tilde{\boldsymbol{w}}_{i-1}}
\end{equation}
where $\mathcal{B}\triangleq\mathbb{E}\boldsymbol{\mathcal{B}}_{i}$
is shown in Table II and
\begin{align}\label{eq120}
    \mathcal{R}\triangleq \mathbb{E}\boldsymbol{\mathcal{R}}_{i}
    =\text{diag}\{R_{u,1},R_{u,2},\cdots,R_{u,N}\}.
\end{align}
The necessary and sufficient condition to ensure mean stability of
the network (namely,
$\mathbb{E}\tilde{\boldsymbol{w}}_{i}\rightarrow 0$ as
$i\rightarrow\infty$) is therefore to select step-sizes $\{\mu_k\}$
that ensure \cite{Cattivelli10}:
\begin{equation}\label{eq18}\boxed{
    \rho(\mathcal{B}) < 1}
\end{equation}
where $\rho(\cdot)$ denotes the spectral radius of its matrix
argument. Note that the coefficient matrices $\{\mathcal{B}\}$ that
control the evolution of $\mathbb{E}\tilde{\boldsymbol{w}}_{i}$ are
different in the cases listed in Table II. These differences lead to
interesting conclusions.

\subsubsection*{\underline{B.1. Comparison of Mean Stability}}
To begin with, the matrix $\mathcal{B}$ is block diagonal in the
non-cooperative case and equal to
\begin{equation}\label{eq45}
    \mathcal{B}_\text{ncop}=I_{NM}-\mathcal{M}\mathcal{R}.
\end{equation}
Therefore, for each of the individual nodes to be stable in the
mean, it is necessary and sufficient that the step-sizes $\{\mu_k\}$
be selected to satisfy
\begin{equation}\label{eq138}
    \rho(\mathcal{B}_\text{ncop})=
    \max_{1\leq k\leq N}\rho(I_M-\mu_kR_{u,k})<1
\end{equation}
since the matrices $\mathcal{M}$ from (\ref{eq84}) and $\mathcal{R}$
from (\ref{eq120}) are block diagonal. Condition (\ref{eq138}) is
equivalent to
\begin{equation}\label{eq57}\text{(stability in the non-cooperative
case)}\quad\boxed{
    0 < \mu_k < \frac{2}{\lambda_{\max}(R_{u,k})}}
    \quad \text{for }k=1,2,\ldots,N
\end{equation}
where $\lambda_{\max}(\cdot)$ denotes the maximum eigenvalue of its
Hermitian matrix argument. Condition (\ref{eq57}) guarantees that
when each node acts individually and applies the LMS recursion
(\ref{eq56}), then the mean of its weight error vector will tend
asymptotically to zero. That is, by selecting the step-sizes to
satisfy (\ref{eq57}), \emph{all} individual nodes will be stable in
the mean.

Now consider the matrix $\mathcal{B}$ in the consensus case; it is
equal to
\begin{equation}\label{eq46}
    \mathcal{B}_\text{cons}=\mathcal{A}^T-\mathcal{M}\mathcal{R}.
\end{equation}
It is seen in this case that the stability of
$\mathcal{B}_\text{cons}$ depends on $\mathcal{A}$. The fact that
the stability of the consensus strategy is sensitive to the choice
of the combination matrix is known in the consensus literature for
the conventional implementation for computing averages and which
does not involve streaming data or gradient noise
\cite{Degroot74,Berger81}. Here, we are studying the more demanding
case of the single time-scale consensus iteration (\ref{eq7}) in the
presence of both \emph{noisy} and \emph{streaming} data. It is clear
from (\ref{eq46}) that the choice of $A$ can destroy the stability
of the consensus network even when the step-sizes are chosen
according to (\ref{eq57}) and all nodes are stable on their own.
This behavior does not occur for diffusion networks where the
matrices $\{\mathcal{B}\}$ for the ATC and CTA diffusion strategies
are instead given by
\begin{equation}\label{eq121}
    \mathcal{B}_\text{atc} =
    \mathcal{A}^T(I_{NM}-\mathcal{M}\mathcal{R})\text{ and }
    \mathcal{B}_\text{cta} =
    (I_{NM}-\mathcal{M}\mathcal{R})\mathcal{A}^T.
\end{equation}
The following result clarifies these statements.
\begin{thm}[Spectral properties of $\mathcal{B}$]\label{thm_1}
It holds that
\begin{equation}\label{eq51}
    \rho(\mathcal{B}_{\rm{atc}})=\rho(\mathcal{B}_{\rm{cta}})
     \leq \rho(\mathcal{B}_{\rm{ncop}})
\end{equation}
irrespective of the choice of the left-stochastic matrices $A$.
Moreover, if the combination matrix $A$ is symmetric, then the
eigenvalues of $\mathcal{B}_{\rm{cons}}$ are less than or equal to
the corresponding eigenvalues of $\mathcal{B}_{\rm{ncop}}$, i.e.,
\begin{equation}\label{eq33}
    \lambda_l(\mathcal{B}_{\rm{cons}})\leq \lambda_l(\mathcal{B}_{\rm{ncop}})
    \quad \text{for $l=1,2,\ldots,NM$}
\end{equation}
where the eigenvalues $\{\lambda_l(\cdot)\}$ are arranged in
decreasing order, i.e., $\lambda_{l_1}(\cdot)\geq
\lambda_{l_2}(\cdot)$ if $l_1\leq l_2$.
\end{thm}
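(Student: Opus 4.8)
The plan is to prove the three assertions by separate, essentially self-contained arguments.

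For the equality $\rho(\mathcal{B}_{\rm atc})=\rho(\mathcal{B}_{\rm cta})$, I would invoke the standard fact that for any two square matrices $X$ and $Y$ of the same size the products $XY$ and $YX$ share the same characteristic polynomial, hence the same spectrum. From Table II, $\mathcal{B}_{\rm atc}=\mathcal{A}^{T}(I_{NM}-\mathcal{M}\mathcal{R})$ and $\mathcal{B}_{\rm cta}=(I_{NM}-\mathcal{M}\mathcal{R})\mathcal{A}^{T}$ are precisely $XY$ and $YX$ with $X=\mathcal{A}^{T}$ and $Y=I_{NM}-\mathcal{M}\mathcal{R}$, so their eigenvalues --- and in particular their spectral radii --- coincide.

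For the bound $\rho(\mathcal{B}_{\rm atc})\leq\rho(\mathcal{B}_{\rm ncop})$, I would work with a block-maximum norm: for an $N\times N$ block matrix $T=[T_{kl}]$ with $M\times M$ blocks, define $\|T\|_{b,\infty}\triangleq\max_{k}\sum_{l}\|T_{kl}\|_{2}$, where $\|\cdot\|_{2}$ is the spectral norm. One checks that $\|\cdot\|_{b,\infty}$ is a submultiplicative matrix norm, so $\rho(\mathcal{B}_{\rm atc})\leq\|\mathcal{B}_{\rm atc}\|_{b,\infty}\leq\|\mathcal{A}^{T}\|_{b,\infty}\,\|I_{NM}-\mathcal{M}\mathcal{R}\|_{b,\infty}$. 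The two factors are then evaluated exactly: since $\mathcal{A}^{T}=A^{T}\otimes I_{M}$ has $(k,l)$-block equal to $a_{l,k}I_{M}$ and $A$ is left-stochastic, $\|\mathcal{A}^{T}\|_{b,\infty}=\max_{k}\sum_{l}a_{l,k}=1$; and since $I_{NM}-\mathcal{M}\mathcal{R}$ is block diagonal with Hermitian blocks $I_{M}-\mu_{k}R_{u,k}$, whose spectral norms equal their spectral radii, $\|I_{NM}-\mathcal{M}\mathcal{R}\|_{b,\infty}=\max_{k}\rho(I_{M}-\mu_{k}R_{u,k})=\rho(\mathcal{B}_{\rm ncop})$ by (\ref{eq138}). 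Combining the three displays yields (\ref{eq51}); note that only left-stochasticity of $A$ was used, so the bound holds for every admissible $A$.

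For the last claim I would first record that $\mathcal{M}$ and $\mathcal{R}$ are commuting block-diagonal matrices, so $\mathcal{M}\mathcal{R}$ is Hermitian (indeed positive semidefinite). Hence when $A=A^{T}$ both $\mathcal{B}_{\rm cons}=\mathcal{A}-\mathcal{M}\mathcal{R}$ and $\mathcal{B}_{\rm ncop}=I_{NM}-\mathcal{M}\mathcal{R}$ are Hermitian, and their difference is $\mathcal{B}_{\rm cons}-\mathcal{B}_{\rm ncop}=\mathcal{A}-I_{NM}=(A-I_{N})\otimes I_{M}$. A symmetric left-stochastic matrix is doubly stochastic, so its eigenvalues are real and lie in $[-1,1]$; therefore $A-I_{N}\preceq 0$ and $\mathcal{A}-I_{NM}\preceq 0$, i.e.\ $\mathcal{B}_{\rm cons}\preceq\mathcal{B}_{\rm ncop}$ in the Loewner order. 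By monotonicity of the ordered eigenvalues under the Loewner order (Weyl's inequality, via the Courant--Fischer characterization), $\lambda_{l}(\mathcal{B}_{\rm cons})\leq\lambda_{l}(\mathcal{B}_{\rm ncop})$ for every $l$, which is (\ref{eq33}).

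The step I expect to be the main obstacle is the norm choice in the second argument: one needs a single submultiplicative norm that assigns the value exactly $1$ to $\mathcal{A}^{T}$ \emph{and} the value exactly $\rho(\mathcal{B}_{\rm ncop})$ --- not something larger --- to the block-diagonal factor $I_{NM}-\mathcal{M}\mathcal{R}$. The block-maximum norm threads this needle precisely because the diagonal blocks $I_{M}-\mu_{k}R_{u,k}$ are Hermitian, so within a block the spectral norm collapses to the spectral radius; a naive induced $\infty$-norm on $NM\times NM$ matrices would overshoot on that factor and break the estimate. Everything else reduces to standard linear algebra.
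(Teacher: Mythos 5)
Your proposal is correct and follows essentially the same route as the paper's Appendix A: the $XY$/$YX$ spectrum identity for the ATC/CTA equality, the block-maximum norm $\max_k\sum_l\|\mathcal{X}_{k,l}\|_2$ (which is exactly the paper's $\|\cdot\|_b$) with $\|\mathcal{A}^T\|_b=1$ and $\|I_{NM}-\mathcal{M}\mathcal{R}\|_b=\rho(\mathcal{B}_{\rm ncop})$ for the inequality, and Weyl's theorem applied to $\mathcal{B}_{\rm ncop}=\mathcal{B}_{\rm cons}+(I_{NM}-\mathcal{A}^T)$ with $I_{NM}-\mathcal{A}^T\geq 0$ for the symmetric case. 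Your closing remark about why the block norm (rather than a plain induced $\infty$-norm) is the right choice accurately identifies the one delicate point of the argument.
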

\begin{proof}
See Appendix \ref{app_A}.
\end{proof}
Result (\ref{eq51}) establishes the important conclusion that the
coefficient matrix $\mathcal{B}$ for the diffusion strategies is
stable whenever $\mathcal{B}_\text{ncop}$ (or, from (\ref{eq138}),
each of the matrices $\{I_{M}-\mu_k R_{u,k}\}$) is stable;
\emph{this conclusion is independent of} $A$. The stability of the
matrices $\{I_{M}-\mu_k R_{u,k}\}$ is ensured by any step-size
satisfying (\ref{eq57}). Therefore, stability of the individual
nodes will always guarantee the stability of $\mathcal{B}$ in the
ATC and CTA diffusion cases, \emph{regardless} of the choice of $A$.
This is not the case for the consensus strategy (\ref{eq7}); even
when the step-sizes $\{\mu_k\}$ are selected to satisfy (\ref{eq57})
so that all individual nodes are mean stable, the matrix
$\mathcal{B}_{\rm cons}$ can still be unstable depending on the
choice of $A$ (and, therefore, on the network topology as well).
Therefore, if we start from a collection of nodes that are behaving
in a stable manner on their own, and if we connect them through a
topology and then apply consensus to solve the same estimation
problem through cooperation, then the network may end up being
unstable and the estimation task can fail drastically (see Fig.
\ref{Fig_3} further ahead). Moreover, it is further shown in
Appendix A that when $A$ is symmetric, the consensus strategy is
mean-stable for step-sizes satisfying:
\begin{equation}\label{eq17}
    0 < \mu_k <
    \frac{1+\lambda_{\min}(A)}{\lambda_{\max}(R_{u,k})}\quad
    \text{for }k=1,2,\ldots,N.
\end{equation}
Note from (\ref{eq50}) that since $A$ is a left-stochastic matrix,
its spectral radius is equal to one and one of its eigenvalues is
also equal to one \cite{Horn85}, i.e., $\lambda_1(A)=\rho(A)=1$.
This implies that the upper bound in (\ref{eq17}) is less than the
upper bound in (\ref{eq57}) so that diffusion networks are stable
over a wider range of step-sizes. Actually, the upper bound in
(\ref{eq17}) can be much smaller than the one in (\ref{eq57}) or
even zero because $\lambda_{\min}(A)$ can be negative or equal to
$-1$.

What if some of the nodes are unstable in the mean to begin with?
How would the behavior of the diffusion and consensus strategies
differ? Assume that there is at least one individual unstable node,
i.e., $\lambda_l(\mathcal{B}_\text{ncop})\leq -1$ for some $l$ so
that $\rho(\mathcal{B}_\text{ncop})\geq 1$. Then, we observe from
(\ref{eq51}) that the spectral radius of $\mathcal{B}_\text{atc}$
can still be smaller than one even if
$\rho(\mathcal{B}_\text{ncop})\geq 1$. It follows that even if some
individual node is unstable, the diffusion strategies can still be
stable if we properly choose $A$. In other words, diffusion
cooperation has a stabilizing effect on the network. In contrast, if
there is at least one individual unstable node and the combination
matrix $A$ is symmetric, then from (\ref{eq33}), no matter how we
choose $A$, the $\rho(\mathcal{B}_\text{cons})$ will be larger than
or equal to one and the consensus network will be unstable.

The above results suggest that fusing results from neighborhoods
according to the consensus strategy (\ref{eq7}) is not necessarily
the best thing to do because it can lead to instability and
catastrophic failure. On the other hand, fusing the results from
neighbors via diffusion ensures stability regardless of the
topology.

\subsubsection*{\underline{B.2. Example: Two-Node Networks}}
To illustrate these important observations, let us consider an
example consisting of two cooperating nodes; in this case, it is
possible to carry out the calculations analytically in order to
highlight the various patterns of behavior. Later, in the
simulations section, we illustrate the behavior for networks with
multiple nodes. Thus, consider a network consisting of $N=2$ nodes.
For simplicity, we assume the weight vector $w^\circ$ is a scalar,
and $R_{u,1}=\sigma^2_{u,1}$ and $R_{u,2}=\sigma^2_{u,2}$. Without
loss of generality, we assume $\mu_1\sigma^2_{u,1}\leq
\mu_2\sigma^2_{u,2}$. The combination matrix for this example is of
the form (Fig. \ref{Fig_3}):
\begin{equation}\label{eq13}
    A^T = \begin{bmatrix}
    1-a & a \\
    b & 1-b
    \end{bmatrix}.
\end{equation}
with $a,b\in[0,1]$. When desired, a symmetric $A$ can be selected by
simply setting $a=b$. Then, using (\ref{eq13}), we get
\begin{align}\label{eq16}
    \mathcal{B}_\text{atc}
    &=\begin{bmatrix}
    (1-\mu_1\sigma^2_{u,1})(1-a) & (1-\mu_2\sigma^2_{u,2})a\\
    (1-\mu_1\sigma^2_{u,1})b & (1-\mu_2\sigma^2_{u,2})(1-b)
    \end{bmatrix}\\
    \mathcal{B}_\text{cons}
    &=\begin{bmatrix}
    1-a-\mu_1\sigma^2_{u,1} & a\\
    b & 1-b-\mu_2\sigma^2_{u,2}
    \end{bmatrix}.
\end{align}
We first assume that
\begin{equation}\label{eq133}
    0<\mu_1\sigma^2_{u,1}\leq \mu_2\sigma^2_{u,2}<2
\end{equation}
so that both individual nodes are stable in the mean by virtue of
(\ref{eq57}). Then, by Theorem 1, the ATC diffusion network will
also be stable in the mean for any choice of the parameters
$\{a,b\}$. We now verify that there are choices for $\{a,b\}$ that
will turn the consensus network unstable. Specifically, we verify
below that if $a$ and $b$ happen to satisfy
\begin{equation}\label{eq44}
    a+b \geq 2-\mu_1\sigma^2_{u,1}
\end{equation}
then consensus will lead to unstable network behavior even though
both individual nodes are stable. Indeed, note first that the
minimum eigenvalue of $\mathcal{B}_\text{cons}$ is given by:
\begin{equation}
    \lambda_{\min}(\mathcal{B}_\text{cons})=
    \frac{(2-a-b-\mu_1\sigma^2_{u,1}-\mu_2\sigma^2_{u,2})-\sqrt{D}}{2}
\end{equation}
where
\begin{equation}\label{eq77}
\begin{aligned}
    D &\triangleq (-a+b-\mu_1\sigma^2_{u,1}+\mu_2\sigma^2_{u,2})^2+4ab\\
    &= (a+b+\mu_1\sigma^2_{u,1}-\mu_2\sigma^2_{u,2})^2+
    4b(\mu_2\sigma^2_{u,2}-\mu_1\sigma^2_{u,1}).
\end{aligned}
\end{equation}
From the first equality of (\ref{eq77}), we know that $D\geq 0$ and,
hence, $\lambda_{\min}(\mathcal{B}_\text{cons})$ is real. When
(\ref{eq133})-(\ref{eq44}) are satisfied, we have that
$(a+b+\mu_1\sigma^2_{u,1}-\mu_2\sigma^2_{u,2})$ and
$4b(\mu_2\sigma^2_{u,2}-\mu_1\sigma^2_{u,1})$ in the second equality
of (\ref{eq77}) are nonnegative. It follows that the consensus
network is unstable since
\begin{align}
    \lambda_{\min}(\mathcal{B}_\text{cons})\leq
    \frac{(2-a-b-\mu_1\sigma^2_{u,1}-\mu_2\sigma^2_{u,2})-
    (a+b+\mu_1\sigma^2_{u,1}-\mu_2\sigma^2_{u,2})}{2}
    \leq -1.\label{eq134}
\end{align}
In Fig. \ref{Fig_3}(a), we set $\mu_1\sigma^2_{u,1}=0.4$ and
$\mu_2\sigma^2_{u,2}=0.6$ so that each individual node is stable. If
we now set $a=b=0.85$, then (\ref{eq44}) is satisfied and the
consensus strategy becomes unstable.

\begin{figure}
\centering
\includegraphics[width=34em]{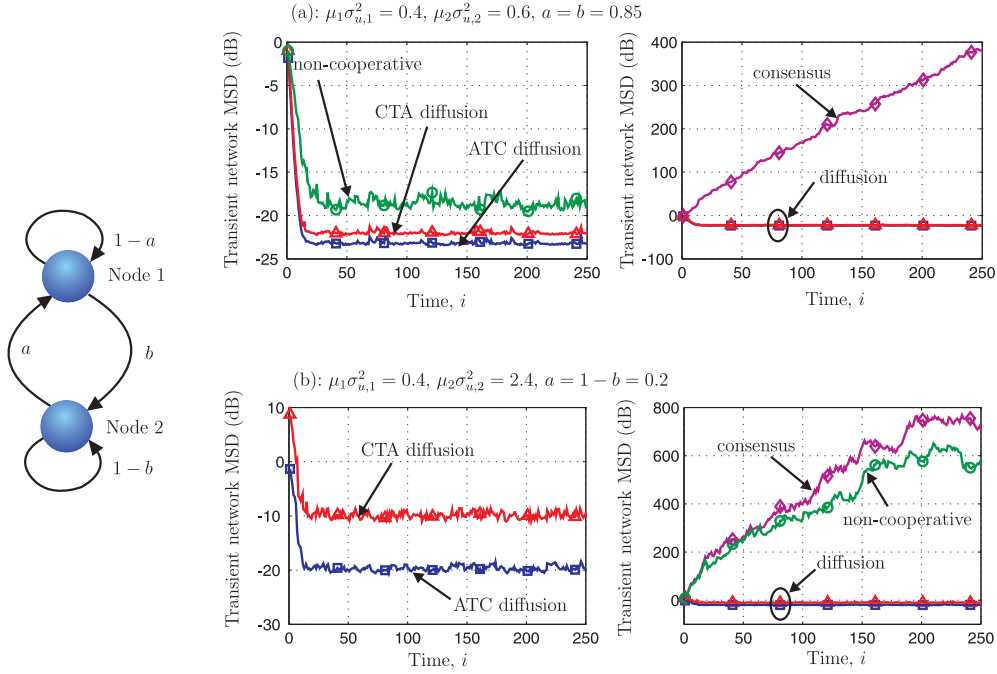}
\caption{Transient network MSD over time with $N=2$. (a)
$\mu_1\sigma^2_{u,1}=0.4$, $\mu_2\sigma^2_{u,2}=0.6$, and
$a=b=0.85$. As seen in the right plot, the consensus strategy is
unstable even when the individual nodes are stable. (b)
$\mu_1\sigma^2_{u,1}=0.4$, $\mu_2\sigma^2_{u,2}=2.4$, and
$a=1-b=0.2$ so that node $2$ is unstable. As seen in the right plot,
the diffusion strategies are able to stabilize the network even when
the non-cooperative and consensus strategies are unstable.}
\label{Fig_3}
\end{figure}

Next, we consider an example satisfying
\begin{equation}\label{eq15}
    0< \mu_1\sigma^2_{u,1}< 2 \leq \mu_2\sigma^2_{u,2}
\end{equation}
so that node $1$ is still stable, whereas node $2$ becomes unstable.
From the first equality of (\ref{eq77}), we again conclude that
\begin{equation}
\begin{aligned}
    \lambda_{\min}(\mathcal{B}_\text{cons})&\leq
    \frac{(2-a-b-\mu_1\sigma^2_{u,1}-\mu_2\sigma^2_{u,2})-
    |-a+b-\mu_1\sigma^2_{u,1}+\mu_2\sigma^2_{u,2}|}{2}\\
    &=\begin{cases}
    1-a-\mu_1\sigma^2_{u,1},
    &\text{if $a+\mu_1\sigma^2_{u,1}\geq b+\mu_2\sigma^2_{u,2}$}\\
    1-b-\mu_2\sigma^2_{u,2},
    &\text{otherwise}
    \end{cases}\\
    &\leq -1.
\end{aligned}
\end{equation}
That is, in this second case, no matter how we choose the parameters
$\{a,b\}$, the consensus network is always unstable. In contrast,
the diffusion network is able to stabilize the network. To see this,
we set $b=1-a$ so that the eigenvalues of $\mathcal{B}_\text{atc}$
in (\ref{eq16}) are
$\{0,1-\mu_1\sigma^2_{u,1}-(\mu_2\sigma^2_{u,2}-\mu_1\sigma^2_{u,1})a\}$.
Some algebra shows that the diffusion network is stable if $a$
satisfies
\begin{equation}\label{eq91}
    0 \leq a <
    \frac{2-\mu_1\sigma^2_{u,1}}{\mu_2\sigma^2_{u,2}-\mu_1\sigma^2_{u,1}}.
\end{equation}
In Fig. \ref{Fig_3}(b), we set $\mu_1\sigma^2_{u,1}=0.4$ and
$\mu_1\sigma^2_{u,1}=2.4$ so that node $1$ is stable, but node $2$
is unstable. If we now set $a=1-b=0.2$, then (\ref{eq91}) is
satisfied and the diffusion strategies become stable even when the
non-cooperative and consensus strategies are unstable.

\subsection{Mean-Square Stability}
We now examine the stability in the mean-square sense of the
consensus and diffusion strategies. Let $\Sigma$ denote an arbitrary
nonnegative-definite matrix that we are free to choose. From
(\ref{eq12}), we get the following weighted variance relation for
sufficiently small step-sizes:
\begin{equation} \label{eq28}\boxed{
    \mathbb{E}\|\tilde{\boldsymbol{w}}_{i}\|^2_{\Sigma}\approx
    \mathbb{E}\|\tilde{\boldsymbol{w}}_{i-1}\|^2_{\mathcal{B}^*\Sigma\mathcal{B}}+
    \text{Tr}(\Sigma \mathcal{Y})}
\end{equation}
where the notation $\|x\|^2_\Sigma$ denotes the weighted square
quantity $x^*\Sigma x$ and $\mathcal{Y}\triangleq
\mathbb{E}\boldsymbol{y}_i\boldsymbol{y}_i^*$ appears in Table II
with the covariance matrix $\mathcal{S}$ defined by:
\begin{align}
    \mathcal{S}\triangleq \mathbb{E}\boldsymbol{s}_i\boldsymbol{s}_i^*
    =\text{diag}\{\sigma^2_{v,1}R_{u,1},\sigma^2_{v,2}R_{u,2},
    \ldots,\sigma^2_{v,N}R_{u,N}\}.
\end{align}
As shown in \cite{Cattivelli10,Zhao12,Sayed13}, step-sizes that
satisfy (\ref{eq18}) and are sufficiently small will also ensure
mean-square stability of the network (namely,
$\mathbb{E}\|\tilde{\boldsymbol{w}}_{i}\|^2_{\Sigma}\rightarrow
c<\infty$ as $i\rightarrow \infty$). Therefore, we find again that,
for infinitesimally small step-sizes, the mean-square stability of
consensus networks is sensitive to the choice of $A$, whereas the
mean-square stability of diffusion networks is not affected by $A$.
In the next section, we will examine $\rho(\mathcal{B})$ more
closely for the various strategies listed in Table II and establish
that diffusion networks are not only more stable than consensus
networks but also lead to better mean-square-error performance as
well.

\subsection{Mean-Square Deviation}
The mean-square deviation (MSD) measure is used to assess how well
the nodes in the network estimate the weight vector, $w^\circ$. The
MSD at node $k$ is defined as follows:
\begin{equation}\label{eq65}
    \text{MSD}_k\triangleq\lim_{i\rightarrow\infty}
    \mathbb{E}\|\tilde{\boldsymbol{w}}_{k,i}\|^2
\end{equation}
where $\|\cdot\|$ denotes the Euclidean norm for vectors. The
network MSD is defined as the average MSD across the network, i.e.,
\begin{equation}\label{eq85}
    \text{MSD}\triangleq
    \frac{1}{N}\sum_{k=1}^{N}\text{MSD}_k.
\end{equation}
Iterating (\ref{eq28}), we can obtain a series expression for the
network MSD as:
\begin{equation} \label{eq31}\boxed{
\begin{aligned}
    \text{MSD}=\frac{1}{N}\sum_{j=0}^\infty\text{Tr}
    [\mathcal{B}^j\mathcal{Y}(\mathcal{B}^*)^j]
\end{aligned}}.
\end{equation}
We can also obtain a series expansion for the MSD at each individual
node $k$ as follows:
\begin{equation}\label{eq67}\boxed{
\begin{aligned}
    \text{MSD}_k = \sum_{j=0}^\infty\text{Tr}
    \left[(e_k^T\otimes I_{M})\cdot\mathcal{B}^j\mathcal{Y}
    \mathcal{B}^{*j}\cdot (e_k\otimes I_{M})\right]
\end{aligned}}
\end{equation}
where $e_k$ denotes the $k$th column of the identity matrix $I_N$.
Expressions (\ref{eq67})-(\ref{eq31}) relate the MSDs directly to
the quantities $\{\mathcal{B},\mathcal{Y}\}$ from Table II.

\section{Comparison of Mean-Square Performance for Homogeneous Agents}
In the previous section, we compared the stability of the various
estimation strategies in the mean and mean-square senses. In
particular, we established that stability of the individual nodes
ensures stability of diffusion networks irrespective of the
combination topology. In the sequel, we shall assume that the
step-sizes are sufficiently small so that conditions (\ref{eq57})
and (\ref{eq17}) hold and the diffusion and consensus networks are
stable in the mean and mean-square sense; as well as the individual
nodes. Under these conditions, the networks achieve steady-state
operation. We now use the MSD expressions derived above to establish
that ATC diffusion achieves lower (and, hence, better) MSD values
than the consensus, CTA, and non-cooperative strategies. In this
way, diffusion strategies do not only ensure stability of the
cooperative behavior but they also lead to improved
mean-square-error performance. We establish these results under the
following reasonable condition.
\begin{assumption}
All nodes in the network use the same step-size, $\mu_k=\mu$, and
they observe data arising from the same covariance data so that
$R_{u,k}=R_u$ for all $k$. In other words, we are dealing with a
network of homogeneous nodes interacting with each other. In this
way, it is possible to quantify the differences in performance
without biasing the results by differences in the adaptation
mechanism (step-sizes)  or in the covariance matrices of the
regression data at the nodes.
\end{assumption}
\noindent Under Assumption 1, it holds that $\mathcal{M} = \mu
I_{NM}$ and $\mathcal{R} = I_N\otimes R_u$, and thus the matrices
$\mathcal{B}$ and $\mathcal{Y}$ in Table II reduce to the
expressions shown in Table III, where we introduced the diagonal
matrix
\begin{equation}\label{eq137}
    \Sigma_v
    \triangleq\text{diag}\{\sigma^2_{v,1},\sigma^2_{v,2},\ldots,\sigma^2_{v,N}\}>0.
\end{equation}
Note that the ATC and CTA diffusion strategies now have the same
coefficient matrix $\mathcal{B}$. We explain in the sequel the terms
that appear in the last row of Table III.

\begin{table}
\centering \caption{Variables for cooperative and non-cooperative
implementations when $\mu_k=\mu$ and $R_{u,k}=R_u$.}
\renewcommand{\arraystretch}{1.2}
\begin{tabular}{|c|c|c|c|c|}
\hline & \textbf{ATC diffusion} (\ref{eq49})& \textbf{CTA diffusion}
(\ref{eq52})& \textbf{Consensus} (\ref{eq7})
& \textbf{Non-cooperative} (\ref{eq56})\\
\hline\hline $\mathcal{B}$ &
    $A^T\otimes I_{M}-A^T\otimes \mu R_u$ &
    $A^T\otimes I_{M}-A^T\otimes \mu R_u$ &
    $A^T\otimes I_{M}-I_N\otimes \mu R_u$ &
    $I_N\otimes I_{M}-I_N\otimes \mu R_u$\\
\hline $\lambda_{l,m}(\mathcal{B})$ &
    $\lambda_l(A)(1-\mu\lambda_m(R_u))$ &
    $\lambda_l(A)(1-\mu\lambda_m(R_u))$ &
    $\lambda_l(A)-\mu\lambda_m(R_u)$ &
    $1-\mu\lambda_m(R_u)$\\
\hline $\mathcal{Y}$ &
    $\mu^2(A^T\Sigma_v A)\otimes R_u$ &
    $\mu^2\Sigma_v\otimes R_u$ &
    $\mu^2\Sigma_v\otimes R_u$ &
    $\mu^2\Sigma_v\otimes R_u$\\
\hline $s^{b*}_{l,m}\mathcal{Y}s^b_{l,m}$ &
    $\mu^2\lambda_m(R_u)|\lambda_l(A)|^2\cdot s^*_l\Sigma_vs_l$ &
    $\mu^2\lambda_m(R_u)\cdot s^*_l\Sigma_vs_l$ &
    $\mu^2\lambda_m(R_u)\cdot s^*_l\Sigma_vs_l$ &
    $\mu^2\lambda_m(R_u)\cdot s^*_l\Sigma_vs_l$ \\
\hline
\end{tabular}
\end{table}

\subsection{Spectral Properties of $\mathcal{B}$}
As mentioned before, the stability and mean-square-error performance
of the various algorithms depend on the corresponding matrix
$\mathcal{B}$; therefore, in this section, we examine more closely
the eigen-structure of $\mathcal{B}$. For the distributed strategies
(diffusion and consensus), the eigen-structure of $\mathcal{B}$ will
depend on the combination matrix $A$. Thus, let $r_l$ and $s_l$
($l=1,2,\ldots,N$) denote an arbitrary pair of right and left
eigenvectors of $A^T$ corresponding to the eigenvalue
$\lambda_l(A)$. That is,
\begin{equation}\label{eq135}
    A^T r_l = \lambda_l(A)r_l \text{ and }
    s_l^* A^T = \lambda_l(A) s_l^*.
\end{equation}
We scale the vectors $r_l$ and $s_l$ to satisfy:
\begin{equation}\label{eq23}
    \|r_l\|=1\text{ and }s^*_lr_l=1 \text{ for all $l$.}
\end{equation}
Recall that $\lambda_1(A)=\rho(A)=1$. Furthermore, we let $z_m$
($m=1,2,\ldots,M$) denote the eigenvector of the covariance matrix
$R_u$ that is associated with the eigenvalue $\lambda_m(R_u)$. That
is,
\begin{equation}
    R_uz_m = \lambda_m(R_u)z_m.
\end{equation}
Since $R_u$ is Hermitian and positive-definite, the $\{z_m\}$ are
orthonormal, i.e., $z^*_{m_2}z_{m_1}=\delta_{m_1m_2}$, and the
$\{\lambda_m(R_u)\}$ are positive. The following result describes
the eigen-structure of the matrix $\mathcal{B}$ in terms of the
eigen-structures of $\{A^T,R_u\}$ for the diffusion and consensus
algorithms of Table III. Note that the results for any of these
distributed strategies collapse to the result for the
non-cooperative strategy when we set $\lambda_l(A)=1$ for all $l$.
\begin{lem}[Eigen-structure of $\mathcal{B}$ under diffusion and
consensus]\label{lem_1} The matrices $\{\mathcal{B}\}$ appearing in
Table III for the diffusion and consensus strategies have right and
left eigenvectors $\{r^b_{l,m},s^{b}_{l,m}\}$ given by:
\begin{align}\label{eq6}
    r^b_{l,m} = r_l \otimes z_m \text{ and }
    s^{b}_{l,m} = s_l \otimes z_m
\end{align}
with the corresponding eigenvalues, $\lambda_{l,m}(\mathcal{B})$,
shown in Table III for $l=1,2,\ldots,N$ and $m=1,2,\ldots,M$. Note
that while the eigenvectors are the same for the diffusion and
consensuses strategies, the corresponding eigenvalues are different.
\end{lem}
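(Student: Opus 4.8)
The plan is to exploit the Kronecker structure of the matrices in Table III directly, using the standard eigenvalue property of Kronecker products: if $P x = \alpha x$ and $Q y = \beta y$, then $(P\otimes Q)(x\otimes y)=(\alpha\beta)(x\otimes y)$, and more generally $(P\otimes Q_1 + R\otimes Q_2)(x\otimes y) = (\alpha Q_1 y + \gamma Q_2 y)\otimes$--- wait, that is not quite a Kronecker eigen-identity unless the second factors share the eigenvector $y$. The key observation that makes everything work is that in all three distributed cases the matrix $\mathcal{B}$ is a sum of two Kronecker products whose \emph{second} factors are $I_M$ and $\mu R_u$ (or just $\mu R_u$), and these two matrices are simultaneously diagonalized by the orthonormal eigenbasis $\{z_m\}$ of $R_u$. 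Likewise the \emph{first} factors are $A^T$ and either $A^T$ or $I_N$, and $A^T$ and $I_N$ are simultaneously ``diagonalized'' in the sense that $r_l$ is a right eigenvector of both. So I would proceed as follows.

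First, write down $\mathcal{B}_{\rm atc}=\mathcal{B}_{\rm cta}= A^T\otimes(I_M-\mu R_u)$ by factoring the expression in Table III, and $\mathcal{B}_{\rm cons}=A^T\otimes I_M - I_N\otimes \mu R_u$. Then apply $\mathcal{B}$ to the candidate right eigenvector $r^b_{l,m}=r_l\otimes z_m$. For the diffusion case, using (\ref{eq135}) and $R_u z_m=\lambda_m(R_u)z_m$, the mixed-product rule gives $\big(A^T\otimes(I_M-\mu R_u)\big)(r_l\otimes z_m)=(A^T r_l)\otimes\big((I_M-\mu R_u)z_m\big)=\lambda_l(A)(1-\mu\lambda_m(R_u))\,(r_l\otimes z_m)$, which is exactly the entry in Table III. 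For the consensus case, $\big(A^T\otimes I_M\big)(r_l\otimes z_m)=\lambda_l(A)(r_l\otimes z_m)$ and $\big(I_N\otimes\mu R_u\big)(r_l\otimes z_m)=\mu\lambda_m(R_u)(r_l\otimes z_m)$, so subtracting gives eigenvalue $\lambda_l(A)-\mu\lambda_m(R_u)$, again matching Table III. The left-eigenvector claim is handled symmetrically: apply $s^{b*}_{l,m}=s_l^*\otimes z_m^*$ on the left, using $s_l^*A^T=\lambda_l(A)s_l^*$ and $z_m^*R_u=\lambda_m(R_u)z_m^*$ (valid since $R_u$ is Hermitian), and the same mixed-product identity yields the transposed relations with the same eigenvalues.

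Finally, I would note that these $NM$ pairs $(l,m)$ exhaust the spectrum: the $\{r_l\}$ need not be orthogonal, but when $A$ (hence $A^T$) is diagonalizable the $\{r_l\otimes z_m\}$ form a basis of $\mathbb{C}^{NM}$, and in the non-diagonalizable case one still accounts for all eigenvalues with multiplicity by a continuity/Jordan argument (the eigenvalue formulas in Table III are the ones that appear, with correct algebraic multiplicities, because $\mathcal{B}$ is a polynomial-type combination in $A^T\otimes I$ and $I\otimes R_u$ which commute). The normalization (\ref{eq23}) is only needed later for the MSD expansion, not for the eigen-identity itself, so I would simply record that $s^{b*}_{l,m}r^b_{l',m'}=(s_l^*r_{l'})(z_m^*z_{m'})=\delta_{ll'}\delta_{mm'}$ follows from (\ref{eq23}) and orthonormality of $\{z_m\}$. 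The only mild subtlety---and the place I would be most careful---is the claim that the listed eigenvalues constitute the \emph{entire} spectrum rather than merely a subset: this is immediate when $A$ is diagonalizable (the generic case, and the one used in the paper's comparisons), and otherwise follows because $A^T\otimes I_M$ and $I_N\otimes\mu R_u$ commute and can be simultaneously triangulated, so the eigenvalues of any linear combination are the corresponding combinations of their eigenvalues, counted with multiplicity. I would state this briefly and move on, since the applications in the paper only invoke the eigenvalue list and the eigenvectors' product form, both of which the computation above establishes.
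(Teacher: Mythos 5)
Your proposal is correct and follows essentially the same route as the paper: factor $\mathcal{B}_{\rm diff}=A^T\otimes(I_M-\mu R_u)$, apply the mixed-product rule $(P\otimes Q)(x\otimes y)=Px\otimes Qy$ to the candidate vectors $r_l\otimes z_m$ and $s_l\otimes z_m$, and repeat term-by-term for the consensus matrix. Your added remark on why the $NM$ pairs exhaust the spectrum (via diagonalizability or simultaneous triangularization of the commuting factors) is a point the paper leaves implicit, but it does not change the argument.
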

\begin{proof}
We only consider the diffusion case and denote its coefficient
matrix by $\mathcal{B}_\text{diff}=A^T\otimes I_{M}-A^T\otimes \mu
R_u$; the same argument applies to the consensus strategy. We
multiply $\mathcal{B}_\text{diff}$ by the $r^b_{l,m}$ defined in
(\ref{eq6}) from the right and obtain
\begin{equation}
\begin{aligned}
    \mathcal{B}_\text{diff}\cdot r^b_{l,m}&=(A^T\otimes I_{M}-A^T\otimes \mu R_u)\cdot
    (r_l \otimes z_m)\\
    &=\lambda_l(A)\cdot(r_l \otimes z_m)-
    \lambda_l(A)\cdot\mu\lambda_m(R_u)\cdot(r_l \otimes z_m)\\
    &=\lambda_l(A)(1-\mu\lambda_m(R_u))\cdot r^b_{l,m}
\end{aligned}
\end{equation}
where we used the Kronecker product property $(A\otimes B)(C\otimes
D) = AC\otimes BD$ for matrices $\{A,B,C,D\}$ of compatible
dimensions \cite{Sayed08}. In a similar manner, we can verify that
$\mathcal{B}_\text{diff}$ has left eigenvector $s^b_{l,m}$ defined
in (\ref{eq6}) with the corresponding eigenvalue
$\lambda_{l,m}(\mathcal{B})$ from Table III.
\end{proof}
\begin{thm}[Spectral radius of $\mathcal{B}$ under diffusion and
consensus]\label{thm_2} Under Assumption 1, it holds that
\begin{equation}\label{eq36}
    \rho(\mathcal{B}_{\rm{diff}})=\rho(\mathcal{B}_{\rm{ncop}})
    \leq \rho(\mathcal{B}_{\rm{cons}})
\end{equation}
where equality holds if $A=I_N$ or when the step-size satisfies:
\begin{equation}\label{eq19}
    0 < \mu\leq \min_{l\neq 1}\frac{1-|\lambda_l(A)|}
    {\lambda_{\min}(R_u)+\lambda_{\max}(R_u)}.
\end{equation}
\end{thm}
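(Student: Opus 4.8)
The plan is to work entirely at the level of eigenvalues, using the explicit description from Lemma \ref{lem_1}. By that lemma, the eigenvalues of $\mathcal{B}_{\rm diff}$ are $\lambda_l(A)(1-\mu\lambda_m(R_u))$ and those of $\mathcal{B}_{\rm cons}$ are $\lambda_l(A)-\mu\lambda_m(R_u)$, for $l=1,\dots,N$ and $m=1,\dots,M$, while the eigenvalues of $\mathcal{B}_{\rm ncop}$ are $1-\mu\lambda_m(R_u)$. So the theorem reduces to two numerical claims: (i) $\max_{l,m}|\lambda_l(A)(1-\mu\lambda_m(R_u))| = \max_m |1-\mu\lambda_m(R_u)|$, and (ii) $\max_m |1-\mu\lambda_m(R_u)| \le \max_{l,m}|\lambda_l(A)-\mu\lambda_m(R_u)|$.

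For claim (i), I would use $|\lambda_l(A)|\le\rho(A)=1$, which follows since $A$ is left-stochastic (as recorded after \eqref{eq17}); hence $|\lambda_l(A)(1-\mu\lambda_m(R_u))|\le|1-\mu\lambda_m(R_u)|$ for every $l,m$. Conversely, taking $l=1$ with $\lambda_1(A)=1$ shows the bound is attained, so the two spectral radii are equal. This gives $\rho(\mathcal{B}_{\rm diff})=\rho(\mathcal{B}_{\rm ncop})$ with no restriction on $\mu$. For claim (ii), the simplest route is again to exhibit, among the consensus eigenvalues, one that is at least as large in modulus as $\rho(\mathcal{B}_{\rm ncop})$. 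Let $m^\star$ attain $\rho(\mathcal{B}_{\rm ncop})=|1-\mu\lambda_{m^\star}(R_u)|$. I expect the extreme case to be $\lambda_{m^\star}(R_u)=\lambda_{\max}(R_u)$ when the step-size is large, where $1-\mu\lambda_{\max}(R_u)$ is the most negative value; then pairing it with $\lambda_l(A)$ for a real negative eigenvalue (e.g.\ $\lambda_{\min}(A)$ if negative) only pushes $\lambda_l(A)-\mu\lambda_m(R_u)$ further from zero. More carefully: for each $m$, the consensus eigenvalues with that fixed $m$ are $\{\lambda_l(A)-\mu\lambda_m(R_u)\}_l$; since the $\lambda_l(A)$ lie in a disk of radius $1$ containing the point $1$, the set $\{\lambda_l(A)-\mu\lambda_m(R_u)\}_l$ contains the point $1-\mu\lambda_m(R_u)$, so its maximum modulus is at least $|1-\mu\lambda_m(R_u)|$. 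Taking the max over $m$ gives (ii).

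For the equality conditions: $A=I_N$ makes $\lambda_l(A)=1$ for all $l$, collapsing the consensus eigenvalues to the non-cooperative ones, so equality is immediate. For the step-size bound \eqref{eq19}, I would argue that when $\mu$ is that small, every consensus eigenvalue $\lambda_l(A)-\mu\lambda_m(R_u)$ with $l\neq1$ has modulus strictly less than $\rho(\mathcal{B}_{\rm ncop})$, so the spectral radius is again attained only at $l=1$, i.e.\ at a non-cooperative eigenvalue. Concretely, $|\lambda_l(A)-\mu\lambda_m(R_u)|\le|\lambda_l(A)|+\mu\lambda_{\max}(R_u)\le|\lambda_l(A)|+\mu\lambda_{\max}(R_u)$, and one wants this $\le 1-\mu\lambda_{\min}(R_u)\le\rho(\mathcal{B}_{\rm ncop})$ (using that $1-\mu\lambda_m(R_u)>0$ for small $\mu$, so $\rho(\mathcal{B}_{\rm ncop})=1-\mu\lambda_{\min}(R_u)$); rearranging gives exactly $\mu\le(1-|\lambda_l(A)|)/(\lambda_{\min}(R_u)+\lambda_{\max}(R_u))$, and taking the min over $l\neq1$ yields \eqref{eq19}.

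The main obstacle is bookkeeping rather than depth: one must be careful that the dominant non-cooperative eigenvalue can be either $1-\mu\lambda_{\min}(R_u)$ (for very small $\mu$, giving a positive value near $1$) or $1-\mu\lambda_{\max}(R_u)$ (for larger $\mu$, giving a negative value). In the first regime the equality-condition argument above is clean; in the general-inequality part one should handle the complex eigenvalues $\lambda_l(A)$ via the observation that the affine image $\{z-\mu\lambda_m(R_u):z\in\mathrm{spec}(A)\}$ still contains the real point $1-\mu\lambda_m(R_u)$, which is what drives claim (ii) without any case split. I would present (i) and (ii) first, then dispatch the two equality cases.
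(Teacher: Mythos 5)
Your proposal is correct and follows essentially the same route as the paper's Appendix B: it identifies the eigenvalues of $\mathcal{B}_{\rm diff}$, $\mathcal{B}_{\rm ncop}$, and $\mathcal{B}_{\rm cons}$ via the Kronecker structure, uses $|\lambda_l(A)|\leq\lambda_1(A)=1$ to get $\rho(\mathcal{B}_{\rm diff})=\rho(\mathcal{B}_{\rm ncop})$, uses the $l=1$ eigenvalue to obtain the inequality with $\rho(\mathcal{B}_{\rm cons})$, and derives \eqref{eq19} from the triangle inequality exactly as the paper does. The only cosmetic difference is that the paper additionally notes that when $A\neq I_N$ and $\rho(\mathcal{B}_{\rm ncop})=-1+\mu\lambda_{\max}(R_u)$ the inequality is strict, which you correctly observe is not needed since \eqref{eq19} already forces $\rho(\mathcal{B}_{\rm ncop})=1-\mu\lambda_{\min}(R_u)$.
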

\begin{proof}
See Appendix \ref{app_B}.
\end{proof}
Note that the upper bound in (\ref{eq19}) is even smaller than the
one in (\ref{eq17}) and, therefore, can again be very small or even
zero. It follows that there is generally a wide range of step-sizes
over which $\rho(\mathcal{B}_{\rm{cons}})$ is greater than
$\rho(\mathcal{B}_{\rm{diff}})$. When this happens, the convergence
rate of diffusion networks is superior to the convergence rate of
consensus networks; in particular, the quantities
$\mathbb{E}\tilde{\boldsymbol{w}}_i$ and
$\mathbb{E}\|\tilde{\boldsymbol{w}}_i\|^2$ will converge faster
towards their steady-state values over diffusion networks than over
consensus networks.

\subsection{Network MSD Performance}
We now compare the MSD performance. Note that the expressions for
the individual MSD in (\ref{eq67}) and the network MSD in
(\ref{eq31}) depend on $\mathcal{B}$ in a nontrivial manner. To
simplify these MSD expressions, we introduce the following
assumption on the combination matrix.
\begin{assumption}
The combination matrix $A$ is diagonalizable, i.e., there exists an
invertible matrix $U$ and a diagonal matrix $\Lambda$ such that
\begin{equation}\label{eq136}
    A^T=U\Lambda U^{-1}
\end{equation}
with
\begin{align}
    U &= \begin{bmatrix}r_1 & r_2 & \cdots & r_N \end{bmatrix},\quad
    U^{-1} ={\rm col}\{s^*_1,s^*_2, \ldots,  s^*_N\}\\
    \Lambda &= {\rm diag}\{\lambda_1(A),\lambda_2(A), \ldots
    ,\lambda_N(A)\}.
\end{align}
That is, the columns of $U$ consist of the right eigenvectors of
$A^T$ and the rows of $U^{-1}$ consist of the left eigenvectors of
$A^T$, as defined by (\ref{eq135}).
\end{assumption}
\noindent Note that, besides condition (\ref{eq23}), it follows from
Assumption 2 that $s^*_{l_2}r_{l_1}=\delta_{l_1l_2}$. Furthermore,
any \emph{symmetric} combination matrix $A$ is diagonalizable and
therefore satisfies condition (\ref{eq136}) automatically. Actually,
when $A$ is symmetric, more can be said about its eigenvectors. In
that case, the matrix $U$ will be orthogonal so that $U^{-1}=U^T$
and it will further hold that $r_{l_2}^*r_{l_1}=\delta_{l_1l_2}$.
Assumption 2 allows the analysis to apply to important cases in
which $A$ is not necessarily symmetric but is still diagonalizable
(such as when $A$ is constructed according to the uniform rule by
assigning to the links of node $k$ weights that are equal to the
inverse of its degree, $n_k$). We can now simplify the MSD
expressions by using the eigen-decomposition of $\mathcal{B}$ from
Lemma 1 and the above eigen-decomposition of $A$.
\begin{lem}[MSD expressions] \label{lem_2}
The MSD at node $k$ from (\ref{eq67}) can be expressed as:
\begin{align}\label{eq73}
    {\rm{MSD}}_k=\sum_{l_1=1}^N\sum_{l_2=1}^N\sum_{m=1}^M
    \frac{(e_k^Tr_{l_1})\cdot s^{b*}_{l_1,m}\mathcal{Y}s^b_{l_2,m}\cdot (r^*_{l_2}e_k)}
    {1-\lambda_{l_1,m}(\mathcal{B})\lambda^*_{l_2,m}(\mathcal{B})}.
\end{align}
Furthermore, if the right eigenvectors $\{r_l\}$ of $A^T$ are
approximately orthonormal, i.e.,
\begin{equation}\label{eq14}
    r^*_{l_2}r_{l_1} \approx \delta_{l_1l_2}
\end{equation}
then the network MSD from (\ref{eq31}) can be approximated by:
\begin{align} \label{eq39}
    {\rm{MSD}}\approx\sum_{l=1}^N\sum_{m=1}^M
    \frac{s^{b*}_{l,m}\mathcal{Y}s^b_{l,m}}
    {N\cdot\left(1-|\lambda_{l,m}(\mathcal{B})|^2\right)}.
\end{align}
\end{lem}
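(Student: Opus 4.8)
The plan is to work directly from the series expression \eqref{eq67} for ${\rm MSD}_k$ and substitute the eigen-decomposition of $\mathcal{B}$ furnished by Lemma \ref{lem_1}. Since the diffusion and consensus matrices $\mathcal{B}$ share the same right and left eigenvectors $\{r^b_{l,m},s^b_{l,m}\}=\{r_l\otimes z_m,\ s_l\otimes z_m\}$, and since Assumption 2 guarantees $\{r_l\}$ and $\{s_l\}$ form a biorthogonal system (so that $\mathcal{B}=\sum_{l,m}\lambda_{l,m}(\mathcal{B})\,r^b_{l,m}s^{b*}_{l,m}$ is a genuine spectral expansion over $NM$ terms), I would write $\mathcal{B}^j=\sum_{l,m}\lambda_{l,m}^j(\mathcal{B})\,r^b_{l,m}s^{b*}_{l,m}$ and similarly $\mathcal{B}^{*j}=\sum_{l,m}(\lambda^*_{l,m}(\mathcal{B}))^j\,s^b_{l,m}r^{b*}_{l,m}$. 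Plugging these two expansions into $(e_k^T\otimes I_M)\mathcal{B}^j\mathcal{Y}\mathcal{B}^{*j}(e_k\otimes I_M)$ produces a quadruple sum over indices $(l_1,m_1)$ and $(l_2,m_2)$; I would then invoke the Kronecker mixed-product rule to factor $(e_k^T\otimes I_M)(r_{l_1}\otimes z_{m_1})=(e_k^Tr_{l_1})z_{m_1}$, and use that $\mathcal{Y}$ in every case of Table III has the Kronecker form $(\text{something})\otimes R_u$ so that $s^{b*}_{l_1,m_1}\mathcal{Y}s^b_{l_2,m_2}$ vanishes unless $m_1=m_2$ (because $z^*_{m_1}R_u z_{m_2}=\lambda_{m_2}(R_u)\delta_{m_1m_2}$ by orthonormality of the $\{z_m\}$). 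This collapses $m_1,m_2$ to a single index $m$, leaving a triple sum.

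Next I would perform the geometric summation over $j$. After the reductions above, the $j$-dependence of each surviving term is exactly $[\lambda_{l_1,m}(\mathcal{B})\lambda^*_{l_2,m}(\mathcal{B})]^j$, so $\sum_{j=0}^\infty$ of that term converges (stability having been assumed, $|\lambda_{l_1,m}(\mathcal{B})\lambda^*_{l_2,m}(\mathcal{B})|<1$) to $1/(1-\lambda_{l_1,m}(\mathcal{B})\lambda^*_{l_2,m}(\mathcal{B}))$. Collecting the scalar prefactors $(e_k^Tr_{l_1})$ and $(r^*_{l_2}e_k)$ — the latter coming from the conjugate-transpose expansion $r^{b*}_{l_2,m}(e_k\otimes I_M)$ after the $z_m$ factor has already been absorbed — yields precisely \eqref{eq73}. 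This establishes the first claim with no approximation.

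For the network MSD \eqref{eq39}, I would start from \eqref{eq31}, which by the same spectral substitution and the same $m_1=m_2$ collapse equals $\frac{1}{N}\sum_{l_1,l_2,m}\frac{(r^*_{l_2}r_{l_1})\,s^{b*}_{l_1,m}\mathcal{Y}s^b_{l_2,m}}{1-\lambda_{l_1,m}(\mathcal{B})\lambda^*_{l_2,m}(\mathcal{B})}$; here the trace produces the factor $r^*_{l_2}r_{l_1}$ in place of the $e_k$-dependent factors, since $\mathrm{Tr}(r_{l_1}s^*_{l_2})=s^*_{l_2}r_{l_1}$ for the $A$-part but the full trace over the $NM$-dimensional space brings in $r^*_{l_2}r_{l_1}\cdot z^*_m z_m$. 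Invoking the near-orthonormality hypothesis \eqref{eq14}, $r^*_{l_2}r_{l_1}\approx\delta_{l_1l_2}$, kills the cross terms $l_1\neq l_2$, and the diagonal terms $l_1=l_2=l$ reduce $1-\lambda_{l,m}(\mathcal{B})\lambda^*_{l,m}(\mathcal{B})$ to $1-|\lambda_{l,m}(\mathcal{B})|^2$, giving \eqref{eq39}. The main obstacle I anticipate is bookkeeping the Kronecker factorizations carefully — in particular making sure the $z_m$ inner products are applied on the correct side of $\mathcal{Y}$ for each of the three cases in Table III (the ATC case has $\mathcal{Y}=\mu^2(A^T\Sigma_vA)\otimes R_u$, so one must check $s^*_{l_1}(A^T\Sigma_vA)s_{l_2}$ does not itself force $l_1=l_2$, which it need not) — rather than any analytic subtlety; the convergence of all the geometric series is already guaranteed by the stability assumptions invoked at the start of Section IV.
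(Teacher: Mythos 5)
Your proposal is correct and follows essentially the same route as the paper's Appendix C: expand $\mathcal{B}^j$ spectrally via Lemma \ref{lem_1}, factor the Kronecker products, collapse $m_1=m_2$ by orthonormality of the $\{z_m\}$, sum the geometric series in $j$, and invoke (\ref{eq14}) to diagonalize the $l_1,l_2$ sum for the network MSD. The only cosmetic difference is that you attribute the $\delta_{m_1m_2}$ collapse to the Kronecker structure of $\mathcal{Y}$, whereas the paper obtains it from the projection factor $z^*_{m_2}z_{m_1}$; both factors appear in the same term and either suffices.
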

\begin{proof}
See Appendix \ref{app_C}.
\end{proof}
\noindent Note that any symmetric combination matrix $A$ satisfies
condition (\ref{eq14}) since, as mentioned above, its right
eigenvectors can be chosen to be orthonormal.

Using the expressions for $\lambda_{l,m}(\mathcal{B})$ and
$s^{b*}_{l,m}\mathcal{Y}s^b_{l,m}$ from Table III and substituting
into (\ref{eq39}), we can obtain the network MSD expressions for the
various strategies. The following result shows how these MSD values
compare to each other.
\begin{thm}[Comparing network MSDs] \label{thm_3}
If condition (\ref{eq14}) is satisfied, then the ATC diffusion
strategy achieves the lowest network MSD in comparison to the other
strategies (CTA diffusion, consensus, and non-cooperative). More
specifically, it holds that
\begin{align}\label{eq92}
    {\rm{MSD}}_{\rm{atc}} &\leq {\rm{MSD}}_{\rm{cta}}
    \leq {\rm{MSD}_{\rm{ncop}}}\\
    {\rm{MSD}}_{\rm{atc}} &\leq {\rm{MSD}}_{\rm{cons}}.\label{eq93}
\end{align}
Furthermore, if $1\leq\mu \lambda_{\min}(R_u)<2$, the consensus
strategy is the worst even in comparison to the non-cooperative
strategy:
\begin{equation}
    {\rm{MSD}}_{\rm{atc}} \leq {\rm{MSD}}_{\rm{cta}}
    \leq {\rm{MSD}}_{\rm{ncop}}\leq {\rm{MSD}}_{\rm{cons}}.
\end{equation}
\end{thm}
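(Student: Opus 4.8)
The plan is to work entirely with the network MSD formula (\ref{eq39}), which under condition (\ref{eq14}) expresses each network MSD as a double sum over $l=1,\ldots,N$ and $m=1,\ldots,M$ of terms of the form $\dfrac{s^{b*}_{l,m}\mathcal{Y}s^b_{l,m}}{N\bigl(1-|\lambda_{l,m}(\mathcal{B})|^2\bigr)}$. The key observation is that the last row of Table III records $s^{b*}_{l,m}\mathcal{Y}s^b_{l,m}$ for each strategy, and these numerators differ only by a factor $|\lambda_l(A)|^2$ for ATC versus the common value $\mu^2\lambda_m(R_u)\,s_l^*\Sigma_v s_l$ shared by CTA, consensus, and non-cooperative. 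Likewise, the eigenvalues $\lambda_{l,m}(\mathcal{B})$ are listed in Table III. So the comparison reduces to a term-by-term comparison of scalar fractions indexed by $(l,m)$, and it suffices to show the inequality holds for every fixed pair $(l,m)$.

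Concretely, for each $(l,m)$ write $x \triangleq \mu\lambda_m(R_u) \in (0,2)$ (by the stability hypothesis) and $\lambda \triangleq \lambda_l(A)$, with $|\lambda|\le 1$. The relevant denominators are $1-|\lambda|^2|1-x|^2$ for ATC and CTA diffusion, $1-|\lambda - x|^2$ for consensus, and $1-|1-x|^2$ for non-cooperative. I would first establish the ATC $\le$ CTA comparison: both share the diffusion denominator $1-|\lambda|^2|1-x|^2$, but ATC's numerator carries the extra factor $|\lambda|^2\le 1$, so ATC's term is termwise no larger — giving (\ref{eq92}) first half immediately. Next, for CTA $\le$ non-cooperative, the numerators are equal, so I need $1-|\lambda|^2|1-x|^2 \ge 1-|1-x|^2$, i.e. $|1-x|^2(1-|\lambda|^2)\ge 0$, which is obvious; that closes (\ref{eq92}). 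For (\ref{eq93}), ATC $\le$ consensus, the cleanest route is to show $|\lambda|^2|1-x|^2 \ge |\lambda-x|^2$ whenever $x\in(0,2)$ and $|\lambda|\le 1$ (so the ATC denominator is the smaller one and hence $1-|\lambda|^2|1-x|^2 \le 1-|\lambda-x|^2$), then combine with the fact that ATC's numerator is again smaller by the factor $|\lambda|^2$. Finally, for the last chain when $1\le x <2$ (equivalently $1\le\mu\lambda_{\min}(R_u)$, so $x\ge 1$ for all $m$), I'd show $|\lambda - x|^2 \ge |1-x|^2$, which gives a smaller consensus denominator and hence the reversed inequality $\mathrm{MSD}_{\mathrm{ncop}}\le \mathrm{MSD}_{\mathrm{cons}}$; here I should be careful that when $x\ge 1$ the quantity $|1-x| = x-1$, and $|\lambda - x|\ge x - |\lambda| \ge x-1$, so the claim follows.

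The subtle points — and the main obstacle — are two. First, one must verify that all denominators appearing are strictly positive (otherwise the fractions and inequalities are meaningless): this is exactly the mean-square stability guaranteed by the step-size conditions in effect, so $|\lambda_{l,m}(\mathcal{B})|<1$ for every relevant term in every strategy being compared, but for the consensus comparisons this implicitly uses that we are in the regime where consensus is itself stable (and the theorem's conclusion about the consensus MSD is only asserted in that regime). Second, the inequality $|\lambda|^2|1-x|^2 \ge |\lambda - x|^2$ for $|\lambda|\le 1$, $x\in(0,2)$ needs a genuine (if short) argument, not just hand-waving: expanding, it is equivalent to $|\lambda|^2(1-2\,\mathrm{Re}(x) + |x|^2) \ge |\lambda|^2 - 2\,\mathrm{Re}(\bar\lambda x) + |x|^2$; with $x$ real this is $|\lambda|^2(1-x)^2 - (\lambda - x)^2 \ge 0$, which factors or can be checked by treating it as a quadratic in $x$ with roots at $x=0$ and the location where the leading behavior changes — I'd verify the endpoints $x=0$ (gives $0$) and examine the sign on $(0,2)$, noting $|\lambda|\le 1$ is what makes it work (it fails for $|\lambda|>1$). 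Once these scalar facts are in hand, summing over $(l,m)$ with the nonnegative weights $s_l^*\Sigma_v s_l \ge 0$ (from $\Sigma_v>0$) and dividing by $N$ yields all the claimed network-MSD inequalities. The detailed algebra is routine and would be deferred to an appendix, but the structure above is the complete skeleton.
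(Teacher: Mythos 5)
Your overall strategy---termwise comparison of the summands in (\ref{eq39}) using the numerators $s^{b*}_{l,m}\mathcal{Y}s^b_{l,m}$ and eigenvalues $\lambda_{l,m}(\mathcal{B})$ from Table III---is exactly the paper's approach, and three of your four comparisons go through: ATC vs.\ CTA (numerator factor $|\lambda|^2\le 1$, same denominator), CTA vs.\ non-cooperative (equal numerators, denominator comparison $|1-x|^2(1-|\lambda|^2)\ge 0$), and the final non-cooperative vs.\ consensus step when $x=\mu\lambda_m(R_u)\ge 1$ (equal numerators, and $|\lambda-x|\ge x-|\lambda|\ge x-1=|1-x|$ is a clean, correct argument---arguably tidier than the paper's, which instead verifies $(1-\lambda_r)(2\delta-1-\lambda_r)\ge 0$). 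You are also right to flag positivity of all denominators as a prerequisite.

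However, your route to (\ref{eq93}), ATC vs.\ consensus, has a genuine gap, and it is precisely the nontrivial part of the theorem. First, the scalar inequality you propose, $|\lambda|^2|1-x|^2\ge|\lambda-x|^2$ for all $|\lambda|\le 1$, $x\in(0,2)$, is false: take $\lambda=0$, where the left side is $0$ and the right side is $x^2>0$ (and neither direction holds uniformly, as $\lambda=0.9$, $x=0.1$ gives the opposite ordering). Second, even if it held, the deduction is invalid: it would make the ATC denominator the \emph{smaller} one, and ``smaller numerator and smaller denominator'' does not imply a smaller ratio. What is actually needed is the combined cross-multiplied inequality
\begin{equation*}
|\lambda|^2\left[1+(1-x)^2-|\lambda-x|^2\right]\le 1 ,
\end{equation*}
which cannot be split into separate numerator and denominator comparisons. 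The paper establishes it by writing the left side as $f(\lambda,\delta)=-|\lambda|^4+2(1-\delta+\lambda_r\delta)|\lambda|^2$ with $\delta=x$, observing that $f$ is affine in $\delta$ so its maximum over the admissible range is attained at an endpoint, and checking $f(\lambda,0)\le 1$ and $f(\lambda,1+\lambda_r)\le 1$, where the endpoint $\delta=1+\lambda_r$ comes from the consensus stability constraint $|\lambda-\delta|<1$. Your skeleton is missing this argument, so (\ref{eq93}) is not proved as written.
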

\begin{proof}
See Appendix \ref{app_D}.
\end{proof}
Therefore, the ATC diffusion strategy outperforms consensus, CTA
diffusion, and non-cooperative strategies when condition
(\ref{eq14}) is satisfied. However, the relation among
MSD$_\text{cta}$, MSD$_\text{cons}$, and MSD$_\text{ncop}$ depends
on the combination matrix $A$. To illustrate this fact, we
reconsider the two-node network from Section III.B with
$\sigma^2_{u,1}=\sigma^2_{u,2}=\sigma^2_u$, $\mu_1=\mu_2=\mu$, and
$0<\mu\sigma^2_u<1$. Furthermore, to ensure the stability of the
consensus strategy and from (\ref{eq44}), the parameters $\{a,b\}$
in (\ref{eq13}) are now chosen to satisfy $a+b<2-\mu\sigma^2_u$. In
this case, the eigenvalues of the combination matrix $A$ in
(\ref{eq13}) are $\{1,1-a-b\}$. It can be verified from (\ref{eq39})
and Table III that the CTA diffusion strategy achieves lower network
MSD (better mean-square performance) than the consensus strategy if
\begin{equation}\label{eq88}
\begin{cases}
    \text{MSD}_\text{cons} \leq \text{MSD}_\text{cta},
    &\text{if $0\leq a+b \leq
    \frac{2(1-\mu\sigma^2_u)}{2-\mu\sigma^2_u}$}\\
    \text{MSD}_\text{cons} \geq \text{MSD}_\text{cta},
    &\text{if $\frac{2(1-\mu\sigma^2_u)}{2-\mu\sigma^2_u}
    \leq a+b < 2-\mu\sigma^2_u$}
\end{cases}
\end{equation}
Similarly, the network MSDs of the consensus and non-cooperative
strategies have the following relation:
\begin{equation}\label{eq89}
\begin{cases}
    \text{MSD}_\text{cons} \leq \text{MSD}_\text{ncop},
    &\text{if $0\leq a+b \leq 2(1-\mu\sigma^2_u)$}\\
    \text{MSD}_\text{cons} \geq \text{MSD}_\text{ncop},
    &\text{if $2(1-\mu\sigma^2_u)\leq a+b < 2-\mu\sigma^2_u$}
\end{cases}
\end{equation}
Combining (\ref{eq88})-(\ref{eq89}), we can divide the $a\times b$
plane into three regions, as shown in Fig. \ref{Fig_4}, where each
region corresponds to one possible relation among MSD$_\text{cta}$,
MSD$_\text{cons}$, and MSD$_\text{ncop}$.

\begin{figure}
\centering
\includegraphics[width=38em]{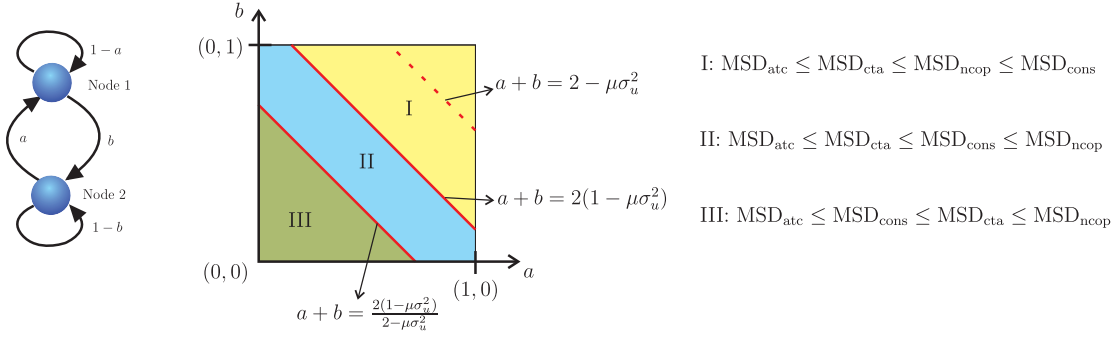}
\caption{Network MSD comparison with $N=2$ and $\mu\sigma^2_u=0.4$.
The consensus strategy is unstable when the parameters $a$ and $b$
lie above the dashed line in region I.} \label{Fig_4}
\end{figure}

\subsection{MSD of Individual Nodes}
In Theorem \ref{thm_3}, we established that the ATC diffusion
strategy performs the best in terms of the average network MSD. It
is still not clear how well the individual nodes perform under each
strategy. It is generally more challenging to compare diffusion and
consensus strategies in terms of the MSDs of their individual nodes
due to the structure of the matrix $\mathcal{B}$ for the consensus
strategy. Nevertheless, this can be accomplished as follows. We
observe from (\ref{eq73}) and Table III that the
$\{\text{MSD}_{k}\}$ for the CTA diffusion and consensus strategies
differ only in the value of $\lambda_{l,m}(\mathcal{B})$. From Table
III, the difference between the values of
$\lambda_{l,m}(\mathcal{B})$ for these two strategies is
\begin{equation}\label{eq47}
\begin{aligned}
    \lambda_{l,m}(\mathcal{B}_\text{cta})-\lambda_{l,m}(\mathcal{B}_\text{cons})
    =\mu\lambda_m(R_u)\cdot(1-\lambda_l(A))
    =\mathcal{O}(\mu)
\end{aligned}
\end{equation}
where the term $\mathcal{O}(\mu)$ denotes a factor that is of the
order of the step-size $\mu$. It follows that for sufficiently small
step-sizes, expression (\ref{eq47}) is close to zero and the CTA
diffusion and consensus strategies will exhibit similar MSDs at the
individual nodes, i.e.,
$\text{MSD}_{\text{cta},k}\approx\text{MSD}_{\text{cons},k}$ for all
$k$. As a result, in the following, we only compare
$\text{MSD}_{\text{atc},k}$, $\text{MSD}_{\text{cta},k}$, and
$\text{MSD}_{\text{ncop},k}$. In particular, we will show that under
certain conditions on the combination matrix $A$, the ATC diffusion
strategy continues to perform the best in terms of the MSD at the
individual nodes in comparison to the other strategies. To do so,
starting from (\ref{eq73}) and the expressions for
$\{\lambda_{l,k}(\mathcal{B}),\mathcal{Y}\}$ in Table III, we can
express the MSD at node $k$ for the ATC diffusion strategy as:
\begin{equation}
\begin{aligned}
    {\rm{MSD}}_{\text{atc},k}=\sum_{m=1}^M\mu^2\lambda_m(R_u)\sum_{l_1,l_2=1}^N
    \frac{\lambda_{l_1}(A)\lambda^*_{l_2}(A)\cdot
    (e_k^Tr_{l_1}s^{*}_{l_1}\Sigma_vs_{l_2}r^*_{l_2}e_k)}
    {1-\lambda_{l_1}(A)\lambda^*_{l_2}(A)\cdot(1-\mu\lambda_m(R_u))^2}
    \triangleq \sum_{m=1}^M{\rm{MSD}}_{\text{atc},k}(m)
\end{aligned}
\end{equation}
where we introduced the notation MSD$_{\text{atc},k}(m)$ to denote
the MSD component at node $k$ that is contributed by the $m$th
eigenvalue of $R_u$, i.e.,
\begin{equation}\label{eq90}
    {\rm{MSD}}_{\text{atc},k}(m)=\mu^2\lambda_m(R_u)\sum_{l_1,l_2=1}^N
    \frac{\lambda_{l_1}(A)\lambda^*_{l_2}(A)\cdot
    (e_k^Tr_{l_1}s^{*}_{l_1}\Sigma_vs_{l_2}r^*_{l_2}e_k)}
    {1-\lambda_{l_1}(A)\lambda^*_{l_2}(A)\cdot(1-\mu\lambda_m(R_u))^2}.
\end{equation}
In a similar vein, we can define the corresponding MSD$_{k}(m)$
terms for the other strategies. We list these terms in Table IV in
two equivalent forms (we will use the series form later). We first
have the following useful preliminary result.

\begin{table}
\centering \caption{Expressions for MSD$_k(m)$ in series form and
eigen-form.}
\renewcommand{\arraystretch}{2}
\begin{tabular}{|c||c|c|}
\hline \multirow{2}{2cm}{\textbf{ATC Diffusion} (\ref{eq49})} &
Series form & {\normalsize
    $\mu^2\lambda_m(R_u)\sum_{j=0}^\infty(1-\mu\lambda_m(R_u))^{2j}\cdot
    e_k^TA^{T(j+1)}\Sigma_vA^{j+1}e_k$}\\ \cline{2-3}
    & Eigen-form & {\normalsize
    $\mu^2\lambda_m(R_u)\sum_{l_1,l_2=1}^N
    \frac{\lambda_{l_1}(A)\lambda^*_{l_2}(A)\cdot
    (e_k^Tr_{l_1}s^{*}_{l_1}\Sigma_vs_{l_2}r^*_{l_2}e_k)}
    {1-\lambda_{l_1}(A)\lambda^*_{l_2}(A)\cdot(1-\mu\lambda_m(R_u))^2}$ }\\
\hline\hline \multirow{2}{2cm}{\textbf{CTA Diffusion} (\ref{eq52})}
& Series form & {\normalsize
    $\mu^2\lambda_m(R_u)\sum_{j=0}^\infty(1-\mu\lambda_m(R_u))^{2j}\cdot
    e_k^TA^{Tj}\Sigma_vA^{j}e_k$}\\ \cline{2-3}
    & Eigen-form & {\normalsize
    $\mu^2\lambda_m(R_u)\sum_{l_1,l_2=1}^N
    \frac{e_k^Tr_{l_1}s^{*}_{l_1}\Sigma_vs_{l_2}r^*_{l_2}e_k}
    {1-\lambda_{l_1}(A)\lambda^*_{l_2}(A)\cdot(1-\mu\lambda_m(R_u))^2}$} \\
\hline\hline \multirow{2}{2cm}{\textbf{Non-cooperative}
(\ref{eq56})} & Series form & {\normalsize
    $\mu^2\lambda_m(R_u)\sum_{j=0}^\infty(1-\mu\lambda_m(R_u))^{2j}\cdot
    e_k^T \Sigma_v e_k$ }\\ \cline{2-3}
    & Eigen-form & {\normalsize
    $\mu^2\lambda_m(R_u)\sum_{l_1,l_2=1}^N
    \frac{e^T_kr_{l_1}s^{*}_{l_1}\Sigma_vs_{l_2}r^*_{l_2}e_k}
    {1-(1-\mu\lambda_m(R_u))^2}$} \\
\hline
\end{tabular}
\end{table}

\begin{lem}[Useful comparisons] \label{lem_3}
The following ratios are positive and independent of the node index
$k$:
\begin{align}\label{eq97}
    \frac{{\rm{MSD}}_{{\rm{ncop}},k}(m)-{\rm{MSD}}_{{\rm{atc}},k}(m)}
    {{\rm{MSD}}_{{\rm{ncop}},k}(m)-{\rm{MSD}}_{{\rm{cta}},k}(m)}
    &=\frac{1}{(1-\mu\lambda_m(R_u))^2}>0\\
    \frac{{\rm{MSD}}_{{\rm{ncop}},k}(m)-{\rm{MSD}}_{{\rm{atc}},k}(m)}
    {{\rm{MSD}}_{{\rm{cta}},k}(m)-{\rm{MSD}}_{{\rm{atc}},k}(m)}
    &=\frac{1}{1-(1-\mu\lambda_m(R_u))^2}>0. \label{eq98}
\end{align}
\end{lem}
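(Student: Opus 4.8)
The plan is to work directly from the series forms of $\mathrm{MSD}_k(m)$ listed in Table IV, since these make the dependence on $A$ transparent and reduce the claimed identities to purely scalar manipulations. Writing $\alpha_m \triangleq (1-\mu\lambda_m(R_u))^2$ for brevity, the three series are
\begin{align*}
    \mathrm{MSD}_{\mathrm{atc},k}(m) &= \mu^2\lambda_m(R_u)\sum_{j=0}^\infty \alpha_m^{j}\cdot e_k^T A^{T(j+1)}\Sigma_v A^{j+1} e_k,\\
    \mathrm{MSD}_{\mathrm{cta},k}(m) &= \mu^2\lambda_m(R_u)\sum_{j=0}^\infty \alpha_m^{j}\cdot e_k^T A^{Tj}\Sigma_v A^{j} e_k,\\
    \mathrm{MSD}_{\mathrm{ncop},k}(m) &= \mu^2\lambda_m(R_u)\sum_{j=0}^\infty \alpha_m^{j}\cdot e_k^T \Sigma_v e_k.
\end{align*}
Introduce the auxiliary nonnegative sequence $c_j \triangleq e_k^T A^{Tj}\Sigma_v A^{j} e_k \geq 0$ (nonnegativity because $\Sigma_v > 0$). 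Then $\mathrm{MSD}_{\mathrm{cta},k}(m)$ is $\mu^2\lambda_m(R_u)$ times $\sum_j \alpha_m^j c_j$, while $\mathrm{MSD}_{\mathrm{atc},k}(m)$ is $\mu^2\lambda_m(R_u)$ times $\sum_j \alpha_m^j c_{j+1}$, and $\mathrm{MSD}_{\mathrm{ncop},k}(m)$ is $\mu^2\lambda_m(R_u)$ times $c_0\sum_j \alpha_m^j = c_0/(1-\alpha_m)$. Note $c_0 = e_k^T\Sigma_v e_k = \sigma^2_{v,k}$.

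First I would prove (\ref{eq97}). The numerator $\mathrm{MSD}_{\mathrm{ncop},k}(m) - \mathrm{MSD}_{\mathrm{atc},k}(m)$ equals $\mu^2\lambda_m(R_u)\bigl(\sum_j \alpha_m^j c_0 - \sum_j \alpha_m^j c_{j+1}\bigr) = \mu^2\lambda_m(R_u)\sum_{j\ge 0}\alpha_m^j(c_0 - c_{j+1})$. Similarly the denominator $\mathrm{MSD}_{\mathrm{ncop},k}(m) - \mathrm{MSD}_{\mathrm{cta},k}(m) = \mu^2\lambda_m(R_u)\sum_{j\ge 0}\alpha_m^j(c_0 - c_j)$. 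The key observation is an index shift: $\sum_{j\ge 0}\alpha_m^j(c_0 - c_{j+1}) = \sum_{j\ge 1}\alpha_m^{j-1}(c_0 - c_j) = \alpha_m^{-1}\sum_{j\ge 1}\alpha_m^{j}(c_0 - c_j) = \alpha_m^{-1}\sum_{j\ge 0}\alpha_m^{j}(c_0 - c_j)$, where the last equality uses that the $j=0$ term $c_0 - c_0 = 0$. Hence the ratio in (\ref{eq97}) is exactly $\alpha_m^{-1} = 1/(1-\mu\lambda_m(R_u))^2$, which is independent of $k$; positivity is immediate since it is a square. To get (\ref{eq98}), I note $\mathrm{MSD}_{\mathrm{cta},k}(m) - \mathrm{MSD}_{\mathrm{atc},k}(m) = \mu^2\lambda_m(R_u)\sum_j \alpha_m^j(c_j - c_{j+1})$, and rewrite (\ref{eq98}) as the statement that
$$
\bigl(\mathrm{MSD}_{\mathrm{ncop},k}(m)-\mathrm{MSD}_{\mathrm{atc},k}(m)\bigr)(1-\alpha_m) = \mathrm{MSD}_{\mathrm{cta},k}(m)-\mathrm{MSD}_{\mathrm{atc},k}(m).
$$
Using the index-shift identity above, the left side is $\mu^2\lambda_m(R_u)\,\alpha_m^{-1}(1-\alpha_m)\sum_j\alpha_m^j(c_0-c_j)$; a short algebraic rearrangement — split $c_0 - c_j = (c_0-c_{j+1}) + (c_{j+1}-c_j)$ inside the sum, or equivalently note $\sum_j\alpha_m^j(c_j-c_{j+1}) = \sum_j\alpha_m^j c_j - \alpha_m^{-1}\sum_j\alpha_m^j(c_j-c_0) - c_0$ via the shift — collapses it to $\mu^2\lambda_m(R_u)\sum_j\alpha_m^j(c_j-c_{j+1})$, which is the right side. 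Then the ratio in (\ref{eq98}) is $1/(1-\alpha_m) = 1/(1-(1-\mu\lambda_m(R_u))^2)$, again independent of $k$.

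The main obstacle is really just bookkeeping: keeping the three index-shifted series aligned and making sure all rearrangements are justified by absolute convergence. Convergence of every series is guaranteed because, under the standing stability assumptions (Section IV, conditions (\ref{eq57}) and (\ref{eq17})), we have $|\lambda_{l,m}(\mathcal{B})| < 1$ for all strategies, so $\alpha_m < 1$ and $\alpha_m^j c_j \to 0$ geometrically (the $c_j$ grow at most polynomially times $\rho(A)^{2j} = 1$); this legitimizes the term-by-term subtraction and the index shift. One should also remark that the denominators in (\ref{eq97})–(\ref{eq98}) are nonzero: $\mathrm{MSD}_{\mathrm{ncop},k}(m) - \mathrm{MSD}_{\mathrm{cta},k}(m)$ and $\mathrm{MSD}_{\mathrm{cta},k}(m) - \mathrm{MSD}_{\mathrm{atc},k}(m)$ need not vanish, and in the degenerate case $A=I_N$ all three MSD components coincide, making the ratios $0/0$; this edge case is excluded or handled separately. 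Finally, the positivity claim for the ratios follows with no extra work, being $\alpha_m^{-1}>0$ and $(1-\alpha_m)^{-1}>0$ respectively, and $\alpha_m \in (0,1)$ under the stated step-size range. An alternative, if one prefers to avoid the series entirely, is to manipulate the eigen-form expressions in Table IV directly — factoring $\lambda_{l_1}(A)\lambda^*_{l_2}(A)$ in and out of the summand and using the identity $\frac{1}{1-x} - \frac{x}{1-x} = 1$ with $x = \lambda_{l_1}(A)\lambda^*_{l_2}(A)\alpha_m$ — but the series route is cleaner and makes the $k$-independence manifest without invoking Assumption 2.
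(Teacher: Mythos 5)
Your proof is correct, but it takes a different route from the paper's. The paper works from the \emph{eigen-forms} in Table IV: it computes the three pairwise differences ${\rm MSD}_{{\rm ncop},k}(m)-{\rm MSD}_{{\rm atc},k}(m)$, ${\rm MSD}_{{\rm ncop},k}(m)-{\rm MSD}_{{\rm cta},k}(m)$, and ${\rm MSD}_{{\rm cta},k}(m)-{\rm MSD}_{{\rm atc},k}(m)$ explicitly and observes that all three equal a $k$-dependent common factor
\begin{equation*}
c_k(m)=\sum_{l_1,l_2=1}^N\frac{\bigl[1-\lambda_{l_1}(A)\lambda^*_{l_2}(A)\bigr]\,(e_k^Tr_{l_1}s^*_{l_1}\Sigma_vs_{l_2}r^*_{l_2}e_k)}{1-\lambda_{l_1}(A)\lambda^*_{l_2}(A)(1-\mu\lambda_m(R_u))^2}
\end{equation*}
multiplied by the scalars $\tfrac{\mu^2\lambda_m(R_u)}{1-\alpha_m}$, $\tfrac{\mu^2\lambda_m(R_u)\alpha_m}{1-\alpha_m}$, and $\mu^2\lambda_m(R_u)$ respectively (with $\alpha_m=(1-\mu\lambda_m(R_u))^2$), so the ratios follow by cancellation of $c_k(m)$. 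You instead work from the \emph{series forms}, set $c_j=e_k^TA^{Tj}\Sigma_vA^je_k$, and exploit the fact that ATC, CTA, and non-cooperative correspond to the sequences $c_{j+1}$, $c_j$, and $c_0$, so that an index shift (using that the $j=0$ term of $\sum_j\alpha_m^j(c_0-c_j)$ vanishes) gives both identities; the $k$-independence is manifest because the entire $k$-dependence sits in the single series $\sum_j\alpha_m^j(c_0-c_j)$ that cancels. The two arguments carry essentially the same information (the paper's $c_k(m)$ is, up to a scalar, your $\sum_j\alpha_m^j(c_0-c_j)$), but your version buys a small amount of generality — it never touches the eigenvectors $\{r_l,s_l\}$ and so does not lean on the diagonalizability in Assumption 2 — and your attention to convergence, to the degenerate $A=I_N$ case, and to the implicit requirement $\mu\lambda_m(R_u)\neq 1$ for the first ratio (where both your index shift and the paper's expression break down) is more careful than the paper's own two-line proof. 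Your telescoping justification for the second ratio is more convoluted than necessary; simply subtracting the two already-established differences, as you effectively note, gives ${\rm MSD}_{{\rm cta},k}(m)-{\rm MSD}_{{\rm atc},k}(m)$ directly.
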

\begin{proof}
From the eigen-forms of $\{\text{MSD}_{k}(m)\}$ in Table IV, the
differences between MSD$_{\text{atc},k}(m)$,
MSD$_{\text{cta},k}(m)$, and MSD$_{\text{ncop},k}(m)$ are given by:
\begin{align}\label{eq94}
    {\rm{MSD}}_{\text{ncop},k}(m)-{\rm{MSD}}_{\text{atc},k}(m)&=
    \frac{\mu^2\lambda_m(R_u)}{1-(1-\mu\lambda_m(R_u))^2}\cdot c_k(m)\\
    {\rm{MSD}}_{\text{ncop},k}(m)-{\rm{MSD}}_{\text{cta},k}(m)&=
    \frac{\mu^2\lambda_m(R_u)\cdot(1-\mu\lambda_m(R_u))^2}
    {1-(1-\mu\lambda_m(R_u))^2}\cdot c_k(m) \label{eq95}\\
    {\rm{MSD}}_{\text{cta},k}(m)-{\rm{MSD}}_{\text{atc},k}(m)&=
    \mu^2\lambda_m(R_u)\cdot c_k(m)\label{eq96}
\end{align}
where
\begin{equation}
    c_k(m) =\sum_{l_1,l_2=1}^N
    \frac{\left[1-\lambda_{l_1}(A)\lambda^*_{l_2}(A)\right]\cdot
    (e^T_kr_{l_1}s^{*}_{l_1}\Sigma_vs_{l_2}r^*_{l_2}e_k)}
    {1-\lambda_{l_1}(A)\lambda^*_{l_2}(A)\cdot(1-\mu\lambda_m(R_u))^2}.
\end{equation}
Then, dividing (\ref{eq94}) by (\ref{eq95}) and (\ref{eq94}) by
(\ref{eq96}), we arrive at (\ref{eq97})-(\ref{eq98}).
\end{proof}
\begin{lem}[Useful ordering] \label{lem_4}
The relation among ${\rm{MSD}}_{{\rm{atc}},k}(m)$,
${\rm{MSD}}_{{\rm{cta}},k}(m)$, and ${\rm{MSD}}_{{\rm{ncop}},k}(m)$
is either
\begin{equation}\label{eq9}
    {\rm{MSD}}_{{\rm{atc}},k}(m)\leq {\rm{MSD}}_{{\rm{cta}},k}(m)
    \leq {\rm{MSD}}_{{\rm{ncop}},k}(m)
\end{equation}
or
\begin{equation}\label{eq99}
    {\rm{MSD}}_{{\rm{atc}},k}(m)\geq {\rm{MSD}}_{{\rm{cta}},k}(m)
    \geq {\rm{MSD}}_{{\rm{ncop}},k}(m).
\end{equation}
\end{lem}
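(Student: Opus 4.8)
The plan is to obtain the statement as an immediate consequence of Lemma~\ref{lem_3}, so that essentially no new computation is required. The starting point is the identity structure already exhibited in (\ref{eq94})--(\ref{eq96}): the three pairwise differences among $\text{MSD}_{\text{atc},k}(m)$, $\text{MSD}_{\text{cta},k}(m)$ and $\text{MSD}_{\text{ncop},k}(m)$ are each equal to one common real scalar $c_k(m)$ times a nonnegative prefactor. Once this is in hand, the two admissible orderings fall out from a single sign argument on $c_k(m)$.

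First I would record the auxiliary positivity facts. Under the standing assumption that the step-sizes are small enough for the individual (non-cooperative) nodes to be mean-stable, condition (\ref{eq57}) specialized to $R_{u,k}=R_u$ gives $0<\mu\lambda_m(R_u)<2$ for every $m$, hence
\begin{equation*}
    \mu^2\lambda_m(R_u)>0,\qquad 0<1-(1-\mu\lambda_m(R_u))^2\leq 1,\qquad (1-\mu\lambda_m(R_u))^2\geq 0 .
\end{equation*}
Consequently each of the three coefficients multiplying $c_k(m)$ on the right-hand sides of (\ref{eq94}), (\ref{eq95}) and (\ref{eq96}) is nonnegative. I would also note that $\text{MSD}_{\text{atc},k}(m)$, $\text{MSD}_{\text{cta},k}(m)$ and $\text{MSD}_{\text{ncop},k}(m)$ are real numbers --- for instance from the convergent series forms in Table IV, in which $A$, $\Sigma_v$, $e_k$, $\mu$ and $\lambda_m(R_u)$ are all real --- so that $c_k(m)$ is itself real.

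Then I would argue by cases on the sign of $c_k(m)$. If $c_k(m)\geq 0$, then (\ref{eq94})--(\ref{eq96}) yield $\text{MSD}_{\text{ncop},k}(m)\geq \text{MSD}_{\text{atc},k}(m)$, $\text{MSD}_{\text{ncop},k}(m)\geq \text{MSD}_{\text{cta},k}(m)$ and $\text{MSD}_{\text{cta},k}(m)\geq \text{MSD}_{\text{atc},k}(m)$; chaining the second and third of these gives exactly (\ref{eq9}). If $c_k(m)\leq 0$, every one of the three differences is $\leq 0$ and the identical bookkeeping produces the reversed chain (\ref{eq99}) (when $c_k(m)=0$ all three values coincide and both chains hold trivially). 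Since these cases are exhaustive, the lemma follows. I do not anticipate a genuine obstacle: the substantive work --- summing the geometric/eigen-sums and isolating the common factor $c_k(m)$ --- was done in Lemma~\ref{lem_3}, and the only point that needs an explicit line is the strict inequality $1-(1-\mu\lambda_m(R_u))^2>0$, which I would tie to the mean-stability of $\mathcal{B}_{\text{ncop}}$ via (\ref{eq57}). It is worth flagging, though outside the scope of this lemma, that which of (\ref{eq9}) or (\ref{eq99}) actually occurs is governed by the sign of $c_k(m)$, hence by $A$ and $\Sigma_v$; pinning that down is what the ensuing analysis addresses.
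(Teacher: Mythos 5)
Your proof is correct and follows essentially the same route as the paper: both rest on the fact, established in the proof of Lemma~\ref{lem_3}, that the three pairwise differences share a common factor $c_k(m)$ with nonnegative coefficients, so they all have the same sign. The paper phrases the case split via the sign of ${\rm MSD}_{{\rm ncop},k}(m)-{\rm MSD}_{{\rm atc},k}(m)$ and the positive ratios (\ref{eq97})--(\ref{eq98}), while you split on the sign of $c_k(m)$ directly from (\ref{eq94})--(\ref{eq96}); this is a trivially equivalent (and slightly cleaner, since it avoids dividing by a possibly zero difference) packaging of the same argument.
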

\begin{proof}
Assume first that ${\rm{MSD}}_{{\rm{atc}},k}(m) \leq
{\rm{MSD}}_{{\rm{ncop}},k}(m)$. Then, using (\ref{eq97}), we get
${\rm{MSD}}_{{\rm{ncop}},k}(m)-{\rm{MSD}}_{{\rm{cta}},k}(m) \geq 0$.
Similarly, from (\ref{eq98}), we get
${\rm{MSD}}_{{\rm{cta}},k}(m)-{\rm{MSD}}_{{\rm{atc}},k}(m)\geq 0$.
We conclude that relation (\ref{eq9}) holds in this case. Assume
instead that ${\rm{MSD}}_{{\rm{atc}},k}(m) \geq
{\rm{MSD}}_{{\rm{ncop}},k}(m)$. Then, a similar argument will show
that (\ref{eq99}) should hold.
\end{proof}
The above result is useful since it allows us to deduce the relation
among ${\rm{MSD}}_{{\rm{atc}},k}(m)$,
${\rm{MSD}}_{{\rm{cta}},k}(m)$, and ${\rm{MSD}}_{{\rm{ncop}},k}(m)$
by only knowing the relation between any two of them. To proceed, we
note that we can alternatively express the MSD$_{k}(m)$ terms in an
equivalent series form. For example, expression (\ref{eq90}) can be
written as:
\begin{align}
    {\rm{MSD}}_{\text{atc},k}(m)&=\mu^2\lambda_m(R_u)\sum_{j=0}^{\infty}\sum_{l_1,l_2=1}^N
    (1-\mu\lambda_m(R_u))^{2j}\cdot\lambda^{j+1}_{l_1}(A)\cdot\lambda^{*(j+1)}_{l_2}(A)\cdot
    (e_k^Tr_{l_1}s^{*}_{l_1}\Sigma_vs_{l_2}r^*_{l_2}e_k)\notag \\
    &=\mu^2\lambda_m(R_u)\sum_{j=0}^{\infty}
    (1-\mu\lambda_m(R_u))^{2j}\cdot
    e_k^T\left(\sum_{l_1=1}^N\lambda^{j+1}_{l_1}(A)r_{l_1}s^{*}_{l_1}\right)
    \Sigma_v\left(\sum_{l_2=1}^N\lambda^{*(j+1)}_{l_2}(A)s_{l_2}r^*_{l_2}\right)e_k \notag\\
    &=\mu^2\lambda_m(R_u)\sum_{j=0}^\infty(1-\mu\lambda_m(R_u))^{2j}\cdot
    e_k^TA^{T(j+1)}\Sigma_vA^{j+1}e_k.
\end{align}
In a similar manner, we can obtain the corresponding MSD$_{k}(m)$
series forms for the other strategies and we list these in Table IV.
In the following, we provide conditions to guarantee that the
individual node performance in the ATC diffusion strategy
outperforms the other strategies.
\begin{thm}[Comparing individual MSDs] \label{thm_4}
If the combination matrix $A$ satisfies
\begin{equation}\label{eq101}
    \Sigma_v-A^T\Sigma_vA \geq 0
\end{equation}
where $\Sigma_v$ is the noise variance (diagonal) matrix defined by
(\ref{eq137}), then:
\begin{equation}\label{eq100}
    {\rm{MSD}}_{{\rm{atc}},k} \leq {\rm{MSD}}_{{\rm{cta}},k}
    \leq {\rm{MSD}}_{{\rm{ncop}},k}.
\end{equation}
\end{thm}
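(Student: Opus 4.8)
The plan is to argue directly from the \emph{series forms} of the per-eigenvalue MSD components collected in Table IV, treating the hypothesis $\Sigma_v-A^T\Sigma_v A\geq 0$ as a monotonicity statement. For a fixed node $k$, introduce the scalar sequence $g_k(j)\triangleq e_k^T A^{Tj}\Sigma_v A^{j}e_k$ for $j\geq 0$, so that $g_k(0)=e_k^T\Sigma_v e_k=\sigma^2_{v,k}$. Reading off the series forms in Table IV, the three strategies of interest admit the representations
\begin{align*}
    {\rm MSD}_{\text{atc},k}(m) &= \mu^2\lambda_m(R_u)\sum_{j=0}^{\infty}(1-\mu\lambda_m(R_u))^{2j}\,g_k(j+1),\\
    {\rm MSD}_{\text{cta},k}(m) &= \mu^2\lambda_m(R_u)\sum_{j=0}^{\infty}(1-\mu\lambda_m(R_u))^{2j}\,g_k(j),\\
    {\rm MSD}_{\text{ncop},k}(m) &= \mu^2\lambda_m(R_u)\sum_{j=0}^{\infty}(1-\mu\lambda_m(R_u))^{2j}\,g_k(0).
\end{align*}
The three series differ only through an index shift applied to $g_k(\cdot)$, and they carry the same nonnegative scalar weights $\mu^2\lambda_m(R_u)(1-\mu\lambda_m(R_u))^{2j}$. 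Hence it suffices to show that the sequence $\{g_k(j)\}_{j\geq 0}$ is nonnegative and nonincreasing.

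The key step is that congruence transformations preserve positive semidefiniteness. Since the powers of $A$ commute, one has $A^{T(j+1)}\Sigma_v A^{j+1}=A^{Tj}\bigl(A^T\Sigma_v A\bigr)A^{j}$, and therefore
\[
    g_k(j)-g_k(j+1)=e_k^T A^{Tj}\bigl(\Sigma_v-A^T\Sigma_v A\bigr)A^{j}e_k=(A^{j}e_k)^T\bigl(\Sigma_v-A^T\Sigma_v A\bigr)(A^{j}e_k)\geq 0
\]
by (\ref{eq101}); moreover $g_k(j)=(A^{j}e_k)^T\Sigma_v(A^{j}e_k)\geq 0$ because $\Sigma_v>0$. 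Thus $0\leq g_k(j+1)\leq g_k(j)\leq g_k(0)$ for every $j$, and comparing the three series term by term gives ${\rm MSD}_{\text{atc},k}(m)\leq {\rm MSD}_{\text{cta},k}(m)\leq {\rm MSD}_{\text{ncop},k}(m)$ for each $m=1,\ldots,M$. Summing over $m$ and using ${\rm MSD}_k=\sum_{m=1}^{M}{\rm MSD}_k(m)$ yields (\ref{eq100}). A shorter route is also available: one only needs $g_k(j+1)\leq g_k(0)$ for all $j$, which already gives ${\rm MSD}_{\text{atc},k}(m)\leq {\rm MSD}_{\text{ncop},k}(m)$; then Lemma \ref{lem_4} forces the full chain (\ref{eq9}) to hold, and summing over $m$ again produces (\ref{eq100}).

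There is no real obstacle here beyond bookkeeping; the only point that warrants a comment is the convergence of the infinite series, which legitimizes the term-by-term manipulation. This holds under the standing small step-size assumptions, since $A$ left-stochastic makes $A^T$ row-stochastic and power-bounded (every entry of $A^{Tj}$ is at most $1$ in magnitude), while $0<\mu\lambda_m(R_u)<2$ gives $(1-\mu\lambda_m(R_u))^2<1$, so each series converges geometrically. Finally, I would remark that condition (\ref{eq101}) is met automatically in natural cases; for instance, with a uniform noise profile $\Sigma_v=\sigma^2_v I_N$ and a symmetric combination matrix $A$ one has $\Sigma_v-A^T\Sigma_v A=\sigma^2_v\bigl(I_N-A^2\bigr)\geq 0$, since the eigenvalues of $A$ lie in $[-1,1]$.
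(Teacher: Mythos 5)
Your proof is correct and follows essentially the same route as the paper's: both hinge on writing the gap between consecutive terms of the Table~IV series as the quadratic form $e_k^TA^{Tj}\left(\Sigma_v-A^T\Sigma_vA\right)A^{j}e_k\geq 0$ and summing against the nonnegative weights $\mu^2\lambda_m(R_u)(1-\mu\lambda_m(R_u))^{2j}$. The only (harmless) difference is that you obtain the full chain directly from the monotonicity of $g_k(j)$, whereas the paper proves only ${\rm MSD}_{\text{atc},k}(m)\leq{\rm MSD}_{\text{cta},k}(m)$ this way and then invokes Lemma~\ref{lem_4} --- a route you also note as an alternative.
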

\begin{proof}
From the series forms of $\{\text{MSD}_k(m)\}$ in Table IV, the
difference
${\rm{MSD}}_{\text{cta},k}(m)-{\rm{MSD}}_{\text{atc},k}(m)$ is given
by:
\begin{equation}
\begin{aligned}
    {\rm{MSD}}_{\text{cta},k}(m)-{\rm{MSD}}_{\text{atc},k}(m)=
    \mu^2\lambda_m(R_u)\sum_{j=0}^\infty(1-\mu\lambda_m(R_u))^{2j}e^T_kA^{Tj}
    \left(\Sigma_v-A^T\Sigma_vA\right)A^{j}e_k.
\end{aligned}
\end{equation}
Since $\Sigma_v-A^T\Sigma_vA \geq 0$, we conclude that
${\rm{MSD}}_{\text{cta},k}(m)\geq{\rm{MSD}}_{\text{atc},k}(m)$ for
all $m$. Then, applying Lemma \ref{lem_4}, we obtain relation
(\ref{eq100}).
\end{proof}
Condition (\ref{eq101}) essentially means that the combination
matrix $A$ should not magnify the noise effect across the network.
However, in general, condition (\ref{eq101}) is restrictive in the
sense that over the set of feasible diagonalizable left-stochastic
matrices $A$ satisfying $a_{l,k}=0$ if $l\notin\mathcal{N}_k$, the
set of combination matrices $A$ satisfying (\ref{eq101}) can be
small. We illustrate this situation by reconsidering the two-node
network (\ref{eq13}) for which
\begin{equation}\label{eq102}
    \Sigma_v-A^T\Sigma_vA=\begin{bmatrix}
    2at-a^2(1+t) & -(1-a)bt-a(1-b) \\
    -(1-a)bt-a(1-b) & 2b-b^2(1+t)
    \end{bmatrix}
\end{equation}
where $t=\sigma^2_{v,1}/\sigma^2_{v,2}$ denotes the ratio of noise
variances at nodes 1 and 2. Note from
\begin{equation}\label{eq123}
    \text{det}(\Sigma_v-A^T\Sigma_vA)=-(a-bt)^2\leq 0
\end{equation}
that equality holds in (\ref{eq123}) if, and only if,
\begin{equation}\label{eq81}
    a=tb.
\end{equation}
That is, when $a\neq tb$, the matrix $(\Sigma_v-A^T\Sigma_vA)$ has
two eigenvalues with different signs. Thus, the only way to ensure
$\Sigma_v-A^T\Sigma_vA\geq 0$ in this case is to set $a=tb$ and,
thus, the matrix $(\Sigma_v-A^T\Sigma_vA)$ will have at least one
eigenvalue at zero since its determinant will be zero. To ensure
$\Sigma_v-A^T\Sigma_vA\geq 0$, its other eigenvalue, which is equal
to $b(1+t^2)(2-b-bt)$, needs to be greater than or equal to zero. It
follows that $b$ must satisfy:
\begin{equation}\label{eq74}
    0\leq b\leq \frac{2}{1+t}.
\end{equation}
Moreover, since $a$ and $b$ must lie within the interval $[0,1]$, we
conclude from (\ref{eq81}) that $b$ must also satisfy:
\begin{equation}\label{eq75}
    0\leq b \leq \min\{1,1/t\}.
\end{equation}
It can be verified that condition (\ref{eq75}) implies condition
(\ref{eq74}) since $\min\{1,1/t\}\leq 2/(1+t)$. That is, for any
left-stochastic matrix $A$ from (\ref{eq13}) satisfying $a=tb$ and
(\ref{eq75}), relation (\ref{eq100}) holds and both nodes improve
their own MSDs by employing the diffusion strategies. Note that
condition (\ref{eq81}) represents a line segment in the unit square
$a,b\in[0,1]$ (see Fig. \ref{Fig_5}). In the following, we relax
condition (\ref{eq101}) with a mild constraint on the network
topology.

In addition to Assumption 2, we further assume that the combination
matrix $A$ is primitive (also called regular). This means that there
exists an integer $j$ such that the $j$th power of $A$ has positive
entries, $[A^j]_{l,k}>0$ for all $l$ and $k$ \cite{Horn85}. We
remark that for any connected network (where a path always exists
between any two arbitrary nodes), if the combination weights
$\{a_{l,k}\}$ satisfy $a_{l,k}>0$ for $l\in\mathcal{N}_k$, then $A$
is primitive. Now, since $A$ is primitive, it follows from the
Perron-Frobenius Theorem \cite{Horn85} that $(A^T)^j$ converges to
the rank-one matrix:
\begin{equation}\label{eq80}
    \lim_{j\rightarrow \infty} (A^T)^j=r_1s^T_1.
\end{equation}
From (\ref{eq50}) and (\ref{eq23}), $r_1$ and $s_1$ satisfy:
\begin{equation}\label{eq68}
    r_1 = \frac{\mathds{1}}{\sqrt{N}} \text{ and }
    \frac{s^T_1\mathds{1}}{\sqrt{N}}=1.
\end{equation}
\begin{thm}[Comparing individual MSDs for regular networks] \label{thm_5}
For any primitive and diagonalizable combination matrix $A$, if
\begin{equation}\label{eq103}
    \frac{s^T_1\Sigma_v s_1}{N} < \sigma^2_{v,k}
\end{equation}
for all $k$, then there exists $\mu^\circ>0$ so that for any
step-size $\mu$ satisfying $0 < \mu \leq \mu^\circ$, it holds:
\begin{align}\label{eq104}
    {\rm{MSD}}_{{\rm{atc}},k} < {\rm{MSD}}_{{\rm{cta}},k}
    < {\rm{MSD}}_{{\rm{ncop}},k}.
\end{align}
\end{thm}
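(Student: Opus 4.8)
The plan is to reduce the three-way comparison to a single strict inequality between $\text{MSD}_{\text{cta},k}(m)$ and $\text{MSD}_{\text{ncop},k}(m)$, and then establish that inequality for small $\mu$ from the series forms in Table~IV. By Lemma~\ref{lem_4}, for each fixed $m$ the three quantities $\text{MSD}_{\text{atc},k}(m)$, $\text{MSD}_{\text{cta},k}(m)$, $\text{MSD}_{\text{ncop},k}(m)$ are totally ordered, either as in (\ref{eq9}) or as in (\ref{eq99}); and by Lemma~\ref{lem_3} the ratios (\ref{eq97})--(\ref{eq98}) are strictly positive. Hence, if I can show $\text{MSD}_{\text{ncop},k}(m)>\text{MSD}_{\text{cta},k}(m)$ for every $m=1,\dots,M$ and every node $k$, then case (\ref{eq99}) is excluded, (\ref{eq97}) forces $\text{MSD}_{\text{ncop},k}(m)>\text{MSD}_{\text{atc},k}(m)$, and (\ref{eq98}) then forces $\text{MSD}_{\text{cta},k}(m)>\text{MSD}_{\text{atc},k}(m)$ (using $0<\mu\lambda_m(R_u)<2$). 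Summing the resulting strict chain $\text{MSD}_{\text{atc},k}(m)<\text{MSD}_{\text{cta},k}(m)<\text{MSD}_{\text{ncop},k}(m)$ over $m$ gives (\ref{eq104}).

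Using the series forms in Table~IV, I would write
\[
    \text{MSD}_{\text{ncop},k}(m)-\text{MSD}_{\text{cta},k}(m)
    =\mu^2\lambda_m(R_u)\sum_{j=0}^\infty (1-\mu\lambda_m(R_u))^{2j}\,c_j,\qquad
    c_j\triangleq \sigma^2_{v,k}-e_k^T A^{Tj}\Sigma_v A^j e_k .
\]
Since $A$ is primitive and diagonalizable, every eigenvalue of $A$ other than $\lambda_1(A)=1$ has modulus strictly less than one, so the powers $A^j$ converge and are uniformly bounded; hence $|c_j|\le C$ for a constant $C$ independent of $j$ and of $\mu$. Moreover, by the Perron--Frobenius limit (\ref{eq80}) and $r_1=\mathds{1}/\sqrt N$ from (\ref{eq68}),
\[
    e_k^T A^{Tj}\Sigma_v A^j e_k\ \longrightarrow\ (e_k^Tr_1)(s_1^T\Sigma_v s_1)(r_1^T e_k)=\frac{s_1^T\Sigma_v s_1}{N},
\]
so that $c_j\to \sigma^2_{v,k}-s_1^T\Sigma_v s_1/N$, which is \emph{strictly positive} by the hypothesis (\ref{eq103}). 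Putting $\varepsilon_k\triangleq\tfrac12\big(\sigma^2_{v,k}-s_1^T\Sigma_v s_1/N\big)>0$, there is an index $J=J(A,\Sigma_v,k)$ with $c_j\ge\varepsilon_k$ for all $j\ge J$.

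Splitting the geometric sum at $j=J$, the head $\sum_{j=0}^{J-1}(1-\mu\lambda_m(R_u))^{2j}c_j$ is bounded in absolute value by $JC$ uniformly in $\mu$, while the tail satisfies
\[
    \sum_{j=J}^{\infty}(1-\mu\lambda_m(R_u))^{2j}c_j\ \ge\ \varepsilon_k\,\frac{(1-\mu\lambda_m(R_u))^{2J}}{1-(1-\mu\lambda_m(R_u))^2}.
\]
Because $1-(1-\mu\lambda_m(R_u))^2=\mu\lambda_m(R_u)\big(2-\mu\lambda_m(R_u)\big)\to0$ and $(1-\mu\lambda_m(R_u))^{2J}\to1$ as $\mu\to0$, this lower bound grows like $1/\mu$, so it eventually overwhelms the bounded head. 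Thus there is a threshold $\mu_{k,m}>0$ below which $\text{MSD}_{\text{ncop},k}(m)>\text{MSD}_{\text{cta},k}(m)$; taking $\mu^\circ$ to be the minimum of $\{\mu_{k,m}\}$ over the finitely many pairs $(k,m)$ (shrunk further, if necessary, to keep the networks mean-square stable and the series expansions valid) proves the theorem via the reduction above.

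The step I expect to be the main obstacle is controlling the \emph{non-monotone} coefficients $c_j$: for small $j$ the matrix $A^j$ may amplify the noise at node $k$, so $c_j$ can be negative and no term-by-term comparison is possible. The resolution is the asymptotic mixing (\ref{eq80}) — this is exactly where primitivity is used — which guarantees that the accumulating bulk of the geometric series carries the strictly positive weight $\varepsilon_k$, and that this bulk, amplified by the factor $1/[1-(1-\mu\lambda_m(R_u))^2]\sim 1/(2\mu\lambda_m(R_u))$, dominates the finitely many ``bad'' early terms once $\mu$ is small. A secondary concern is keeping every inequality strict, which is ensured by the strict positivity of the Lemma~\ref{lem_3} ratios and of the limit in (\ref{eq103}), together with the finiteness of $\{(k,m)\}$ that permits one common bound $\mu^\circ$.
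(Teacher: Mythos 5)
Your proposal is correct and follows essentially the same route as the paper's Appendix E: express the difference $\mathrm{MSD}_{\mathrm{ncop},k}(m)-\mathrm{MSD}_{\mathrm{cta},k}(m)$ via the series forms of Table IV, invoke the Perron--Frobenius limit (\ref{eq80}) to show the coefficients converge to the strictly positive quantity $\sigma^2_{v,k}-s_1^T\Sigma_v s_1/N$, split the geometric sum so the divergent tail dominates the bounded head as $\mu\to 0$, and then conclude the full strict chain via Lemma \ref{lem_4}. Your explicit use of the strict positivity of the Lemma \ref{lem_3} ratios to upgrade Lemma \ref{lem_4}'s ordering to strict inequalities is a small but welcome clarification that the paper leaves implicit.
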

\begin{proof}
See Appendix \ref{app_E}.
\end{proof}
\begin{figure}
\centering
\includegraphics[width=36em]{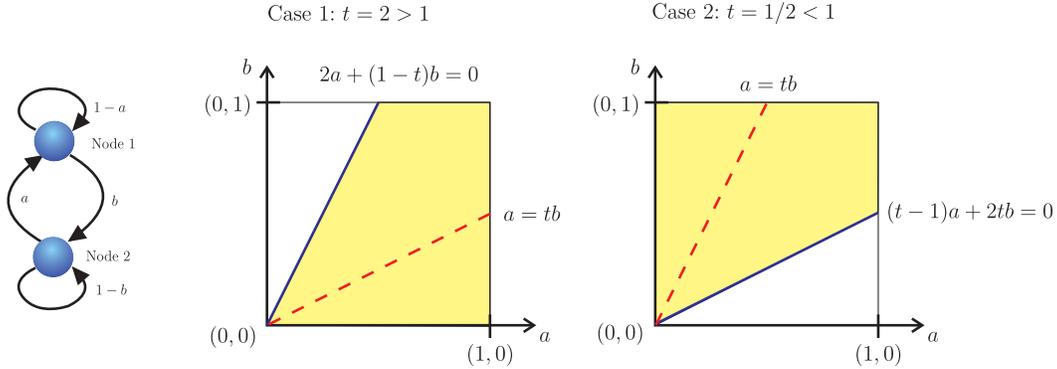}
\caption{Comparison of individual node MSD using $N=2$ and
$t=\sigma^2_{v,1}/\sigma^2_{v,2}$. There exists a step-size region
such that ${\rm{MSD}}_{{\rm{atc}},k} < {\rm{MSD}}_{{\rm{cta}},k}<
{\rm{MSD}}_{{\rm{ncop}},k}$ for $k=1,2$ when the parameters $a$ and
$b$ lie in the shaded regions. The dashed lines indicate condition
(\ref{eq81}).} \label{Fig_5}
\end{figure}
We show in Appendix \ref{app_F} that for any primitive $A$,
condition (\ref{eq101}) implies condition (\ref{eq103}). To
illustrate these two conditions, we consider again the two-node
network. It can be verified that $s^T_1$ for $A^T$ in (\ref{eq13})
has the form $s^T_1=\begin{bmatrix}\sqrt{2}b/(a+b) & \sqrt{2}a/(a+b)
\end{bmatrix}$. Then, some algebra shows that condition
(\ref{eq103}) becomes
\begin{equation}\label{eq106}
    (t-1)a+2bt > 0\text{ and }
    2a+(1-t)b > 0.
\end{equation}
Recall that $t=\sigma^2_{v,1}/\sigma^2_{v,2}$. We illustrate
condition (\ref{eq106}), along with condition (\ref{eq81}), in Fig.
\ref{Fig_5}. We observe that condition (\ref{eq81}), shown as the
dashed lines, is contained in condition (\ref{eq106}), shown as the
shaded regions, and that compared to condition (\ref{eq81}),
condition (\ref{eq106}) enlarges the region of $A$ for which the ATC
diffusion strategy performs the best in terms of the individual MSD
performance.

\section{Simulation Results}
We consider a network with $20$ nodes and random topology. The
regression covariance matrix $R_u$ is diagonal with entries randomly
generated from $[2,4]$, and the noise variances $\{\sigma^2_{v,k}\}$
are randomly generated over $[-30,-10]$ dB (see Fig. \ref{Fig_6}).
The network estimates a $10\times1$ (i.e., $M=10$) unknown vector
$w^\circ$ with every entry equal to $1/\sqrt{10}$.

\begin{figure}
\centering
\includegraphics[width=28em]{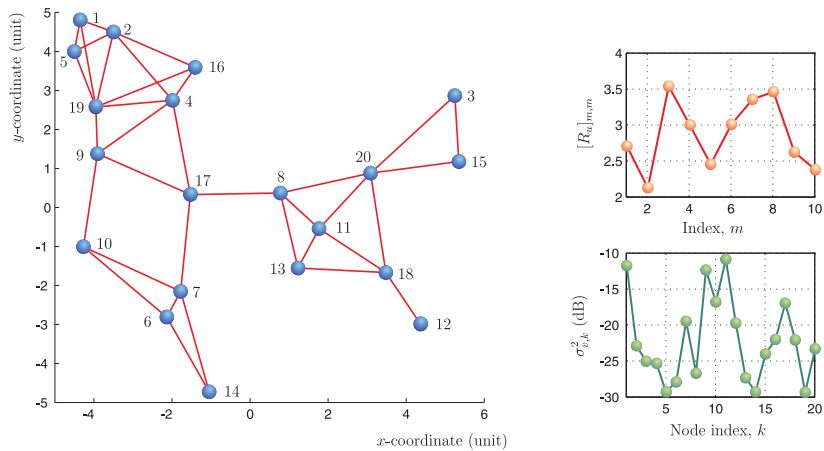}
\caption{Network topology and noise and data power profiles at the
nodes. The number next to a node denotes the node index.}
\label{Fig_6}
\end{figure}

\begin{table}
\centering \caption{Combination rules used in the simulations,
$a_{l,k}=0 \text{ if } l\notin{\mathcal{N}}_k$}
\begin{tabular}{|l|l|}
\hline  \textbf{Name} & \textbf{Rule} \\
\hline\hline Relative-variance \cite{Zhao12} & $a_{l,k}=
    \sigma^{-2}_{v,l}/\sum_{j\in\mathcal{N}_k}\sigma^{-2}_{v,j}$\\
\hline Uniform \cite{Sayed13} & $a_{l,k}=1/n_k$\\
\hline Metropolis \cite{Xiao05} &
    $a_{l,k}=
    \begin{cases}
    1-\sum_{j\neq k}a_{k,j}, &\text{if $l=k$}\\
    1/\max\{n_k,n_l\}, &\text{if $l\in\mathcal{N}_k\setminus\{k\}$}
    \end{cases}$\\
\hline
\end{tabular}
\end{table}

\begin{figure}
\centering
\includegraphics[width=30em]{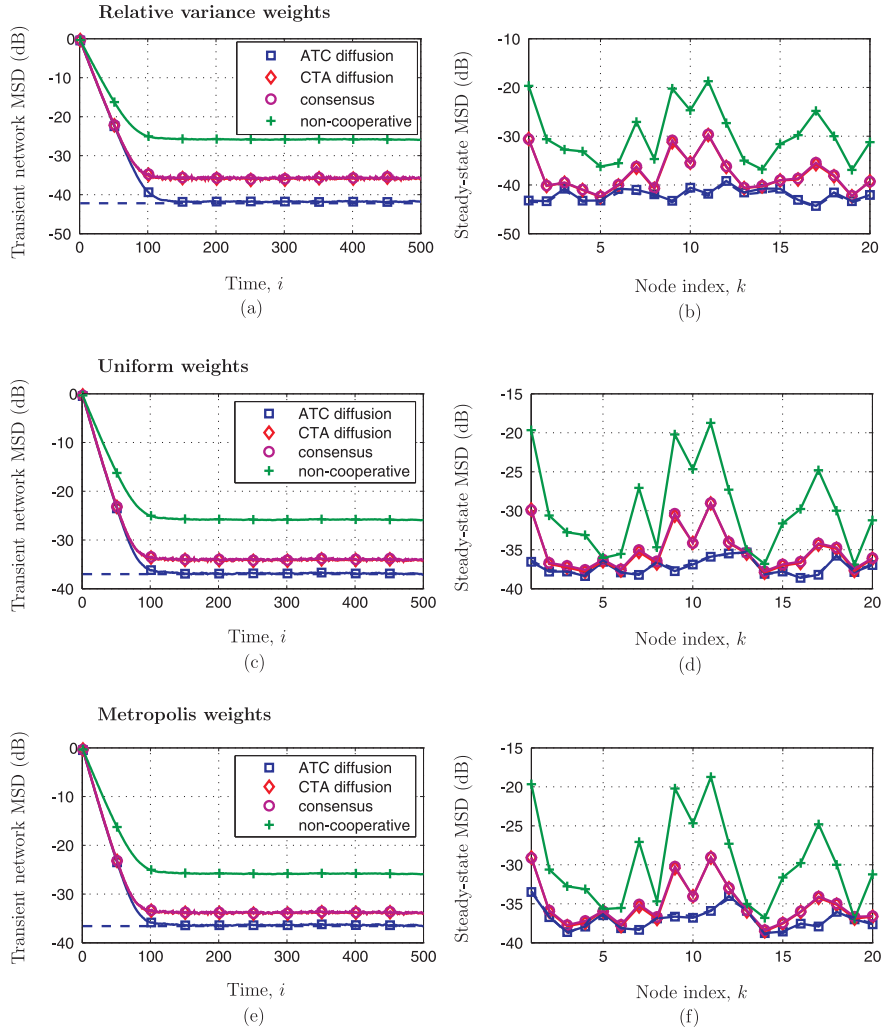}
\caption{Transient network MSD over time (left, with peak values
normalized to 0dB) and steady-state MSD at the individual nodes
(right) for (a)-(b) the relative-variance, (c)-(d) uniform, and
(e)-(f) Metropolis rules. The dashed lines on the left/right hand
side indicate the theoretical network/individual MSD from
(\ref{eq39})/(\ref{eq73}) for the ATC diffusion strategy.}
\label{Fig_1}
\end{figure}

The transient network MSD over time is shown on the left hand side
of Fig. \ref{Fig_1} with three possible combination rules:
relative-variance \cite{Zhao12}, uniform \cite{Sayed13}, and
Metropolis \cite{Xiao05} (see Table V). Note that the matrix $A$ for
the Metropolis rule is symmetric. The step-size $\mu$ is set to
$\mu=0.02$. We observe that, as expected, the ATC diffusion strategy
outperforms the other strategies, especially for the
relative-variance rule. It also suggests that some conventional
choices of combination weights, such as the Metropolis rule, may not
be the most suitable for adaptation in the presence of both noisy
and streaming data because such weights do not take into account the
noise profile across the nodes (see, e.g., \cite{Zhao12,Sayed13} for
more details on this issue). We further show the steady-state MSD at
the individual nodes on the right hand side of Fig. \ref{Fig_1}. We
observe that the ATC diffusion strategy achieves the lowest MSD at
each node in comparison to the other strategies. These observations
are in agreement with the results predicted by the theoretical
analysis. The theoretical expressions for MSDs from
(\ref{eq67})-(\ref{eq31}) are also depicted in Fig. \ref{Fig_1} for
the ATC diffusion strategy and match well with simulations.

\begin{figure}
\centering
\includegraphics[width=30em]{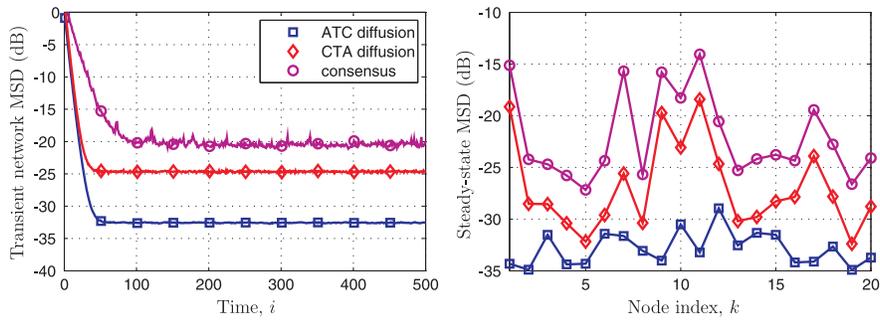}
\caption{Transient network MSD over time (left) and steady-state MSD
at the individual nodes (right) for the relative-variance
combination rule using $\mu = 0.075$.} \label{Fig_7}
\end{figure}

We further compare the mean-square performance of the distributed
strategies for larger step-sizes. We set the step-size to
$\mu=0.075$ and use the relative-variance combination rule. The
transient network MSD over time is shown on the left hand side of
Fig. \ref{Fig_7}. We observe that the ATC and CTA diffusion
strategies have the same convergence rate and converge faster than
the consensus strategy. Moreover, the diffusion strategies achieve
lower network MSD than the consensus strategy. We also show the
steady-state MSD at the individual nodes on the right hand side of
Fig. \ref{Fig_7}. We see again that ATC diffusion performs the best
in comparison to the other strategies at each individual node.

\section{Concluding Remarks}
We compared analytically several cooperative estimation strategies,
including ATC diffusion, CTA diffusion, and consensus for
distributed estimation over networks. The results show that
diffusion networks are more stable than consensus networks.
Moreover, the stability of diffusion networks is independent of the
combination weights, whereas consensus networks can become unstable
even if all individual nodes are stable. Furthermore, in
steady-state, the ATC diffusion strategy performs the best not only
in terms of the network MSD, but also in terms of the MSDs at the
individual nodes.

\appendices
\section{Proof of Theorem \ref{thm_1}}\label{app_A}
First, note that the matrices $\{\mathcal{B}\}$ for the ATC and CTA
diffusion strategies given by (\ref{eq121}) have the same
eigenvalues (and, therefore, the same spectral radius) because for
any matrices $X$ and $Y$ of compatible dimensions, the matrix
products $XY$ and $YX$ have the same eigenvalues \cite{Horn85}. So
let us evaluate the spectral radius of $\mathcal{B}_\text{atc}$. To
do so, we introduce a convenient block matrix norm, and denote it by
$\|\cdot\|_b$; it is defined as follows. Let $\mathcal{X}$ be an
$N\times N$ block matrix with blocks of size $M\times M$ each. Its
block matrix norm is defined as:
\begin{equation}\label{eq78}
    \|\mathcal{X}\|_{b}\triangleq\max_{1\leq k \leq N}
    \left(\sum_{l=1}^N \|\mathcal{X}_{k,l} \|_2\right)
\end{equation}
where $\mathcal{X}_{k,l}$ denotes the $(k,l)$th block of
$\mathcal{X}$ and $\|\cdot\|_{2}$ denotes the 2-induced norm
(largest singular value) of its matrix argument. Now, since
$\{I_{NM},\mathcal{M},\mathcal{R}\}$ are block diagonal matrices,
the following property holds:
\begin{equation}
\begin{aligned}
    \|I_{NM}-\mathcal{M}\mathcal{R}\|_{b}=
    \max_{1\leq k \leq N}\|I_M-\mu_kR_{u,k}\|_2
    =\max_{1\leq k \leq N}\rho(I_M-\mu_kR_{u,k})
    =\rho(\mathcal{B}_\text{ncop})
\end{aligned}
\end{equation}
where we used the fact that the 2-induced norm of any Hermitian
matrix coincides with its spectral radius. In addition, since $A$ is
a left-stochastic matrix, it holds that
\begin{equation}
\begin{aligned}
    \|\mathcal{A}^T\|_{b}&=\max_{1\leq k \leq N}
    \left(\sum_{l=1}^N \|a_{l,k}I_M \|_2\right)
    =\max_{1\leq k \leq N}
    \left(\sum_{l=1}^N a_{l,k}\right)
    =1.
\end{aligned}
\end{equation}
Accordingly, using the fact that the spectral radius of a matrix is
upper bounded by any norm of the matrix \cite{Horn85}, we get:
\begin{align}
    \rho(\mathcal{B}_\text{atc})\leq
    \|\mathcal{A}^T(I_{NM}-\mathcal{M}
    \mathcal{R})\|_b
    \leq \|\mathcal{A}^T\|_b\cdot \|I_{NM}-\mathcal{M}
    \mathcal{R}\|_b
    =\rho(\mathcal{B}_\text{ncop})
\end{align}
which establishes (\ref{eq51}).

Now, assume $A$ is symmetric. Since it is also left-stochastic, it
follows that its eigenvalues are real and lie inside the interval
$[-1,1]$. Therefore, $(I_{NM}-\mathcal{A}^T)$ is
nonnegative-definite. Moreover, since $\mathcal{M}$ and
$\mathcal{R}$ commute, i.e.,
$\mathcal{R}\mathcal{M}=\mathcal{M}\mathcal{R}$, it can be verified
that $\mathcal{B}_\text{cons}$ in (\ref{eq46}) and
$\mathcal{B}_\text{ncop}$ in (\ref{eq45}) are Hermitian. In
addition, the matrices $\mathcal{B}_\text{cons}$ and
$\mathcal{B}_\text{ncop}$ are related as follows:
\begin{equation}
    \mathcal{B}_\text{ncop} = \mathcal{B}_\text{cons}+(I_{NM}-\mathcal{A}^T)
\end{equation}
with $(I_{NM}-\mathcal{A}^T)\geq 0$. Using Weyl's
Theorem\footnote{Let $\{D', D, \Delta D\}$ be $M\times M$ Hermitian
matrices with ordered eigenvalues $\{\lambda_m(D'),
\lambda_m(D),\lambda_m(\Delta D)\}$, i.e.,
$\lambda_1(D)\geq\lambda_2(D)\geq \ldots\geq\lambda_M(D)$, and
likewise for the eigenvalues of $\{D',\Delta D\}$. Weyl's Theorem
states that if $D'=D+\Delta D$, then
\begin{equation*}
    \lambda_m(D)+\lambda_M(\Delta D)\leq
    \lambda_m(D')\leq\lambda_m(D)+\lambda_1(\Delta D)
\end{equation*}
for $1\leq m \leq M$. When $\Delta D\geq 0$, it holds that
$\lambda_m(D')\geq\lambda_m(D)$.} \cite{Horn85}, we arrive at
(\ref{eq33}). Following a similar argument, it holds for symmetric
$A$ that
\begin{equation}
    \lambda_l\left\{\lambda_{\min}(A)\cdot
    I_{NM}-\mathcal{M}\mathcal{R}\right\}
    \leq \lambda_l(\mathcal{B}_\text{cons})\quad
    \text{for }l=1,2,\ldots,NM.
\end{equation}
Thus, the matrix $\mathcal{B}_\text{cons}$ is stable (namely,
$-1<\lambda_l(\mathcal{B}_\text{cons})<1$ for $l=1,2,\ldots,NM$) if
\begin{align}
    \lambda_l\left(\lambda_{\min}(A)\cdot
    I_{NM}-\mathcal{M}\mathcal{R}\right) &> -1\\
    \lambda_l(\mathcal{B}_\text{ncop}) &< 1
\end{align}
for $l=1,2,\ldots,NM$, or, equivalently,
\begin{align}
    \lambda_{\min}(A)-\mu_k\lambda_m(R_{u,k}) &> -1\\
    1-\mu_k\lambda_m(R_{u,k}) &< 1
\end{align}
for $k=1,2,\ldots,N$ and $m=1,2,\ldots,M$. We then arrive at
(\ref{eq17}).

\section{Proof of Theorem \ref{thm_2}} \label{app_B}
For the diffusion strategies, from Table III and since $\rho(A)=1$,
we have
\begin{equation}
    \rho(\mathcal{B}_\text{diff})=\rho[A^T\otimes(I_M-\mu R_u)]
    =\rho(A)\cdot \rho(I_M-\mu R_u) = \rho(I_M-\mu R_u) =
    \rho(\mathcal{B}_\text{ncop}).
\end{equation}
Moreover, since $1\in\{\lambda_l(A)\}$, we have
\begin{equation}\label{eq22}
    \rho(\mathcal{B}_\text{ncop})=\max_{1\leq m\leq M}|1-\mu \lambda_{m}(R_u)|
    \leq \max_{1\leq l\leq N}\max_{1\leq m\leq M}|\lambda_l(A)-\mu \lambda_{m}(R_u)|
    = \rho(\mathcal{B}_\text{cons})
\end{equation}
and we arrive at (\ref{eq36}). It is obvious that when $A=I_N$, then
equality in (\ref{eq22}) holds and $\rho(\mathcal{B}_\text{ncop})=
\rho(\mathcal{B}_\text{cons})$. We now consider the case when $A\neq
I_N$. Note that the spectral radius of $\mathcal{B}_\text{ncop}$ is
given by
\begin{equation}
    \rho(\mathcal{B}_\text{ncop}) = \max\{1-\mu\lambda_{\min}(R_u),
    -1+\mu\lambda_{\max}(R_u)\}.
\end{equation}
We first verify that equality in (\ref{eq22}) holds only when
$\rho(\mathcal{B}_\text{ncop})=1-\mu\lambda_{\min}(R_u)$. Indeed, if
$\rho(\mathcal{B}_\text{ncop})=-1+\mu\lambda_{\max}(R_u)\geq0$, we
have that $\mu\lambda_{\max}(R_u)\geq 1$ and we get from
(\ref{eq22}) that
\begin{align}
    \rho(\mathcal{B}_\text{cons})&=
    \max_{1\leq l\leq N}\max_{1\leq m\leq M}|\lambda_l(A)-\mu \lambda_{m}(R_u)|\notag\\
    &\geq
    |\lambda_l(A)-\mu \lambda_{\max}(R_u)|\notag\\
    &\geq|\mathrm{Re}\{\lambda_l(A)\}-\mu
    \lambda_{\max}(R_u)|\notag\\
    &=-\mathrm{Re}\{\lambda_l(A)\}+\mu \lambda_{\max}(R_u)
\end{align}
since $\mathrm{Re}\{\lambda_l(A)\}\leq 1$ where
$\mathrm{Re}\{\cdot\}$ denotes the real part of its argument. Since
$A\neq I_N$, there exists some $l$ such that
$\mathrm{Re}\{\lambda_l(A)\}<1$ and then
$\rho(\mathcal{B}_\text{cons})>-1+\mu\lambda_{\max}(R_u)=\rho(\mathcal{B}_\text{ncop})$.
Now, assume that
$\rho(\mathcal{B}_\text{ncop})=1-\mu\lambda_{\min}(R_u)$. Then,
equality in (\ref{eq22}) holds if
\begin{equation}\label{eq25}
    |\lambda_l(A)-\mu \lambda_{m}(R_u)|\leq \rho(\mathcal{B}_\text{ncop})
\end{equation}
for all $l$ and $m$. It is obvious that relation (\ref{eq25}) holds
for $l=1$ since $\lambda_1(A)=1$ and
\begin{align}
    \rho(\mathcal{B}_\text{ncop})&=
    \max_{1\leq m\leq M}|1-\mu \lambda_{m}(R_u)| \notag\\
    &\geq |\lambda_1(A)-\mu \lambda_{m}(R_u)|.
\end{align}
For $l=2,3,\ldots,N$, by the triangular inequality of norms, we have
that $|\lambda_l(A)-\mu \lambda_{m}(R_u)|\leq
|\lambda_l(A)|+\mu\lambda_{\max}(R_u)$. Hence, the inequality in
(\ref{eq25}) holds if
\begin{equation}
    |\lambda_l(A)|+\mu\lambda_{\max}(R_u)\leq 1-\mu\lambda_{\min}(R_u)
\end{equation}
for $l=2,3,\ldots,N$ and we arrive at (\ref{eq19}).

\section{Proof of Lemma \ref{lem_2}} \label{app_C}
From Lemma \ref{lem_1}, the eigen-decomposition for the matrix power
$\mathcal{B}^j$ is given by:
\begin{equation}\label{eq118}
    \mathcal{B}^j=\sum_{l=1}^N\sum_{m=1}^M\lambda^j_{l,m}(\mathcal{B})
    \cdot r^b_{l,m}s^{b*}_{l,m}.
\end{equation}
Using (\ref{eq118}), we can rewrite the MSD at node $k$ from
(\ref{eq67}) as:
\begin{equation}\label{eq62}
\begin{aligned}
    \text{MSD}_k&=
    \sum_{j=0}^\infty\sum_{l_1,l_2=1}^N\sum_{m_1,m_2=1}^M
    \text{Tr}\left[\lambda^j_{l_1,m_1}(\mathcal{B})
    \lambda^{*j}_{l_2,m_2}(\mathcal{B})\cdot
    (e_k^T\otimes I_{M}) \cdot r^b_{l_1,m_1}s^{b*}_{l_1,m_1}
    \mathcal{Y}s^{b}_{l_2,m_2}r^{b*}_{l_2,m_2}\cdot (e_k\otimes I_{M})\right]\\
    &=\sum_{l_1,l_2=1}^N\sum_{m_1,m_2=1}^M
    \frac{\left(r^{b*}_{l_2,m_2}(e_k\otimes I_{M})
    (e^T_k\otimes I_{M})r^b_{l_1,m_1}\right)\cdot
    \left(s^{b*}_{l_1,m_1} \mathcal{Y}s^{b}_{l_2,m_2}\right)}
    {1-\lambda_{l_1,m_1}(\mathcal{B})\lambda^*_{l_2,m_2}(\mathcal{B})}
\end{aligned}
\end{equation}
where we used $\text{Tr}(AB)=\text{Tr}(BA)$ and the expression for
the infinite sum of a geometric series. Using (\ref{eq6}), we have:
\begin{align}
    r^{b*}_{l_2,m_2}(e_k\otimes I_{M})
    (e^T_k\otimes I_{M})r^b_{l_1,m_1}
    =(r^*_{l_2}e_ke_k^Tr_{l_1})\otimes (z^*_{m_2}z_{m_1})
    =(r^*_{l_2}e_ke_k^Tr_{l_1})\cdot\delta_{m_1m_2} \label{eq119}
\end{align}
since the eigenvectors $\{z_m\}$ are orthonormal. Substituting
(\ref{eq119}) into (\ref{eq62}), we arrive at (\ref{eq73}).
Likewise, from (\ref{eq85}) and (\ref{eq73}), the network MSD is
given by
\begin{equation}
\begin{aligned}
    \text{MSD}=\frac{1}{N}\sum_{l_1,l_2=1}^N\sum_{m=1}^M
    \frac{\left(\sum_{k=1}^Nr^*_{l_2}e_ke_k^Tr_{l_1}\right)
    \cdot s^{b*}_{l_1,m}\mathcal{Y}s^b_{l_2,m}}
    {1-\lambda_{l_1,m}(\mathcal{B})\lambda^*_{l_2,m}(\mathcal{B})}.
\end{aligned}
\end{equation}
From assumption (\ref{eq14}), we can establish (\ref{eq39}) since
\begin{equation}\label{eq117}
    \sum_{k=1}^Nr^*_{l_2}e_ke_k^Tr_{l_1}=r^*_{l_2}\cdot I_N\cdot r_{l_1}\approx
    \delta_{l_1l_2}.
\end{equation}

\section{Proof of Theorem \ref{thm_3}} \label{app_D}
We first verify that ${\rm{MSD}}_{\rm{atc}} \leq
{\rm{MSD}}_{\rm{cta}}$, ${\rm{MSD}}_{\rm{cta}} \leq
{\rm{MSD}}_{\rm{ncop}}$, and ${\rm{MSD}}_{\rm{atc}} \leq
{\rm{MSD}}_{\rm{cons}}$. We show the result by verifying that the
individual terms on the right hand side of (\ref{eq39}) for the
various strategies have the same ordering. That is, from
(\ref{eq39}) and Table III, we verify that the following ratios,
which correspond to ${\rm{MSD}}_{\rm{atc}} \leq
{\rm{MSD}}_{\rm{cta}}$, ${\rm{MSD}}_{\rm{cta}} \leq
{\rm{MSD}}_{\rm{ncop}}$, and ${\rm{MSD}}_{\rm{atc}} \leq
{\rm{MSD}}_{\rm{cons}}$, respectively, are upper bounded by one:
\begin{align}\label{eq64}
    |\lambda_l(A)|^2&\leq 1\\
    \frac{1-(1-\mu\lambda_m(R_u))^2}
    {1-|\lambda_l(A)|^2\cdot(1-\mu\lambda_m(R_u))^2}&\leq 1\label{eq24} \\
    \frac{|\lambda_l(A)|^2\left(1-|\lambda_l(A)-\mu\lambda_m(R_u)|^2\right)}
    {1-|\lambda_l(A)|^2\cdot (1-\mu\lambda_m(R_u))^2}
    &\leq 1 \label{eq42}
\end{align}
for all $l$ and $m$. Note that relations (\ref{eq64})-(\ref{eq24})
hold since $|\lambda_l(A)|\leq1$ for all $l$ in view of the fact
that $A$ is left-stochastic and, hence, $\rho(A)=1$. We therefore
established (\ref{eq92}). On the other hand, relation (\ref{eq42})
would hold if, and only if,
\begin{equation}\label{eq43}
    |\lambda_l(A)|^2\left[1+(1-\mu\lambda_m(R_u))^2
    -|\lambda_l(A)-\mu\lambda_m(R_u)|^2\right]\leq 1.
\end{equation}
To establish that (\ref{eq43}) is true for all $l$ and $m$, we
introduce the compact notation $\lambda=\lambda_l(A)$,
$\delta=\mu\lambda_m(R_u)$, and consider the following function of
two variables:
\begin{equation}\label{eq30}
    f(\lambda,\delta) \triangleq |\lambda|^2\left[1+(1-\delta)^2-|\lambda-\delta|^2\right]
    \text{ with $|\lambda|\leq 1$, $\delta\in(0,2)$, and $|\lambda-\delta|<1$.}
\end{equation}
The range for $\delta$ ensures condition (\ref{eq57}) and the
stability of the diffusion network, while the range for
$|\lambda-\delta|$ ensures that the consensus network is stable,
i.e., $|\lambda_{l,m}(\mathcal{B}_\text{cons})|<1$ for all $l$ and
$m$. Then, we would like to show that $f(\lambda,\delta)\leq 1$.
Since $\lambda$ is generally complex-valued, we denote the real part
of $\lambda$ by $\lambda_r$. Then, the term $|\lambda-\delta|^2$ in
(\ref{eq30}) is given by $|\lambda-\delta|^2=
|\lambda|^2+\delta^2-2\lambda_r\delta$ and $f(\lambda,\delta)$ from
(\ref{eq30}) becomes
\begin{equation}\label{eq32}
\begin{aligned}
    f(\lambda,\delta) =
    -|\lambda|^4+2(1-\delta+\lambda_r\delta)|\lambda|^2.
\end{aligned}
\end{equation}
Since $f(\lambda,\delta)$ is linear in $\delta$, the maximum value
of $f(\lambda,\delta)$ in (\ref{eq32}) over $\delta$ occurs at the
end points of $\delta$. Since $\delta\in(0,2)$ and
$|\lambda_r-\delta|\leq |\lambda-\delta|<1$, we conclude that
$0<\delta<1+\lambda_r$. Substituting the end points of $\delta$ into
(\ref{eq32}), we have
\begin{align}
    f(\lambda,0) &= -(|\lambda|^2-1)^2+1 \leq 1\\
    f(\lambda,1+\lambda_r) &= -|\lambda|^4+2\lambda_r^2|\lambda|^2
    \leq |\lambda|^4\leq 1
\end{align}
where we used the fact that $\lambda_r^2\leq |\lambda|^2$ and
$|\lambda|\leq 1$. We therefore established (\ref{eq93}).

Let us now examine what happens when the step-size is such that
$1\leq\mu \lambda_{\min}(R_u)<2$. Again, from (\ref{eq39}) and Table
III, we establish that ${\rm{MSD}}_{\rm{ncop}} \leq
{\rm{MSD}}_{\rm{cons}}$ this conclusion by showing that the ratio of
the individual terms appearing in the sums (\ref{eq39}) is upper
bounded by one:
\begin{equation}\label{eq122}
    \frac{1-|\lambda_l(A)-\mu\lambda_m(R_u)|^2}
    {1-(1-\mu\lambda_m(R_u))^2}\leq 1
\end{equation}
for all $l$ and $m$. Condition (\ref{eq122}) is equivalent to
showing that
\begin{equation}\label{eq69}
\begin{aligned}
    |\lambda_l(A)-\mu\lambda_m(R_u)|^2-(1-\mu\lambda_m(R_u))^2
    =|\lambda|^2-2\lambda_r\delta - (1-2\delta) \geq 0
\end{aligned}
\end{equation}
where we used the notation from (\ref{eq30}). Relation (\ref{eq69})
holds since $\delta\geq\mu\lambda_{\min}(R_u)\geq1\geq|\lambda|\geq
|\lambda_r|$ and then
\begin{equation}
\begin{aligned}
    |\lambda|^2-2\lambda_r\delta - (1-2\delta)\geq
    \lambda_r^2-2\lambda_r\delta - (1-2\delta)
    = (1-\lambda_r)(2\delta-1-\lambda_r)
    \geq 0.
\end{aligned}
\end{equation}

\section{Proof of Theorem \ref{thm_5}} \label{app_E}
From the series forms of $\{\text{MSD}_k(m)\}$ in Table IV, the
difference between MSD$_{\text{cta},k}(m)$ and
MSD$_{\text{ncop},k}(m)$ can be expressed as:
\begin{equation}\label{eq71}
\begin{aligned}
    {\rm{MSD}}_{\text{ncop},k}(m)-{\rm{MSD}}_{\text{cta},k}(m)=
    \mu^2\lambda_m(R_u)\sum_{j=0}^\infty(1-\mu\lambda_m(R_u))^{2j}e^T_k
    \left(\Sigma_v-A^{Tj}\Sigma_vA^{j}\right)e_k.
\end{aligned}
\end{equation}
From (\ref{eq80}), we have
\begin{equation}
    \lim_{j\rightarrow \infty}e^T_k\left(\Sigma_v-A^{Tj}\Sigma_vA^{j}\right)e_k
    = \sigma^2_{v,k}-e^T_kr_1s^T_1\Sigma_v s_1r_1^Te_k.
\end{equation}
Therefore, there exists an integer $J_m$ such that for any
$\varepsilon
> 0$,
\begin{equation}\label{eq70}
    e^T_k \left(\Sigma_v-A^{Tj}\Sigma_vA^{j}\right)e_k
    \geq \sigma^2_{v,k}-e^T_kr_1s^T_1\Sigma_v s_1r_1^Te_k-\varepsilon
    \triangleq \Delta
\end{equation}
for all $j\geq J_m$. From (\ref{eq68}), $\Delta$ in (\ref{eq70})
becomes $\Delta = \sigma^2_{v,k}-s^T_1\Sigma_v s_1/N-\varepsilon$.
From condition (\ref{eq103}), we are able to choose $\varepsilon$
small enough such that $\Delta$ is strictly greater than zero.
Therefore, expression (\ref{eq71}) is lower bounded by:
\begin{equation}\label{eq130}
\begin{aligned}
    {\rm{MSD}}_{\text{ncop},k}(m)-{\rm{MSD}}_{\text{cta},k}(m)\geq
    \mu^2\lambda_m(R_u)\left[-z+\Delta\cdot\sum_{j=J_m}^\infty
    (1-\mu\lambda_m(R_u))^{2j}\right]
\end{aligned}
\end{equation}
where the term $z\geq0$ is an upper bound for the first $J_m$ terms
of the summation in (\ref{eq71}), i.e.,
\begin{equation}
    \left|\sum_{j=0}^{J_m-1}(1-\mu\lambda_m(R_u))^{2j}e^T_k
    \left(\Sigma_v-A^{Tj}\Sigma_vA^{j}\right)e_k\right|\leq
    z<\infty.
\end{equation}
It can be verified that the series inside the brackets of
(\ref{eq130}) is strictly decreasing in
$\mu\in(0,1/\lambda_m(R_u))$. In addition,
\begin{equation}
    \lim_{\mu\rightarrow 0}\left(\sum_{j=J_m}^\infty
    (1-\mu\lambda_m(R_u))^{2j}\right) = \infty.
\end{equation}
Thus, there exists a $\mu^\circ_m > 0$ such that the sum inside the
bracket of (\ref{eq130}) becomes positive and, hence,
\begin{equation}\label{eq72}
    {\rm{MSD}}_{\text{ncop},k}(m)-{\rm{MSD}}_{\text{cta},k}(m)> 0
\end{equation}
for all $0 < \mu \leq \mu^\circ_m$. Repeating the above argument, we
will obtain a collection of step-size bounds
$\{\mu^\circ_1,\mu^\circ_2,\ldots,\mu^\circ_M\}$. We then choose
$\mu^\circ=\min\{\mu^\circ_1,\mu^\circ_2,\ldots,\mu^\circ_M\}$ so
that relation (\ref{eq72}) holds for all $m$. Then, applying Lemma
\ref{lem_4}, we arrive at (\ref{eq104}) for any $\mu$ satisfying
$0<\mu\leq \mu^\circ$.

\section{Condition (\ref{eq101}) Implies Condition (\ref{eq103}) when $A$ is
Primitive}\label{app_F} It follows from (\ref{eq101}) that
$A^{Tj}\Sigma_vA^j-A^{T(j+1)}\Sigma_vA^{j+1}\geq 0$ for any
nonnegative integer $j$ and then
\begin{equation}
    \sum_{j=0}^J\left(A^{Tj}\Sigma_vA^j-A^{T(j+1)}\Sigma_vA^{j+1}\right)
    =\Sigma_v-A^{T(J+1)}\Sigma_vA^{J+1}\geq 0.
\end{equation}
Since $A$ is primitive, as $J$ tends to infinity, we get from
(\ref{eq80}) that
\begin{equation}\label{eq105}
    \lim_{J\rightarrow \infty}\left(\Sigma_v-A^{T(J+1)}\Sigma_vA^{J+1}\right)
    =\Sigma_v-r_1s_1^T\Sigma_vs_1r_1^T \geq 0.
\end{equation}
Using (\ref{eq68}), we conclude that
\begin{equation}\label{eq48}
    \text{det}(\Sigma_v-r_1s_1^T\Sigma_vs_1r_1^T)=
    \text{det}(\Sigma_v)\cdot\text{det}\left(I_N-\Sigma_v^{-1}\mathds{1}\cdot
    \frac{s_1^T\Sigma_vs_1}{N}\mathds{1}^T\right)
    \geq 0.
\end{equation}
Since for any column vectors $\{x,y\}$ of size $N$, it holds that
$\text{det}(I_N-x\cdot y^T)=1-y^T\cdot x$, relation (\ref{eq48})
implies that the following must hold:
\begin{equation}\label{eq76}
    \left(1-\frac{s_1^T\Sigma_vs_1}{N}\mathds{1}^T\cdot\Sigma_v^{-1}\mathds{1}\right)
    \geq 0.
\end{equation}
However, by the Cauchy-Schwarz inequality \cite{Horn85} and using
the fact that $s_1^T\mathds{1}/\sqrt{N}=1$ from (\ref{eq68}), we
have
\begin{equation}\label{eq79}
\begin{aligned}
    \frac{s_1^T\Sigma_vs_1}{N}\mathds{1}^T\cdot\Sigma_v^{-1}\mathds{1}
    =\left(\sum_{l=1}^N\sigma^2_{v,l}\frac{s_{l,1}^2}{N}\right)
    \cdot\left(\sum_{l=1}^N\sigma^{-2}_{v,l}\right) \geq
    \left(\sum_{l=1}^N\frac{s_{l,1}}{\sqrt{N}}\right)^2
    =\left(\frac{s_1^T\mathds{1}}{\sqrt{N}}\right)^2=1
\end{aligned}
\end{equation}
where $s_{l,1}$ denotes the $l$th entry of $s_1$. Therefore,
relation (\ref{eq76}) can hold only with equality in (\ref{eq79}).
In turn, equality in (\ref{eq79}) holds if, and only if, there
exists a constant $c$ such that $s_{l,1}/\sqrt{N}=c\cdot
\sigma^{-2}_{v,l}$ for all $l$. By the fact that
$s_1^T\mathds{1}/\sqrt{N}=1$, we get:
\begin{equation}
    \frac{s_{l,1}}{\sqrt{N}} = \frac{\sigma^{-2}_{v,l}}
    {\sum_{m=1}^N\sigma^{-2}_{v,m}}
\end{equation}
and arrive at (\ref{eq103}) since
\begin{equation}
\begin{aligned}
    \sigma^2_{v,k}-\frac{s_1^T\Sigma_v s_1}{N}
    = \sigma^2_{v,k}-\frac{1}{\sum_{l=1}^N\sigma^{-2}_{v,l}}
    > \sigma^2_{v,k}-\frac{1}{\sigma^{-2}_{v,k}}
    =0.
\end{aligned}
\end{equation}

\bibliographystyle{IEEEtran}
\bibliography{IEEEfull,refs}

\end{document}